\newif\ifFullVersion
\newcommand{\AxisRotator}[1][rotate=0]{\tikz [x=0.25cm,y=0.60cm,line width=.2ex,-stealth,#1] \draw (0,0) arc (-150:150:1 and 1);}
\newcommand{\BeginMyItemize}{\begin{itemize}\setlength{\itemsep}{-\parskip}}
\newcommand{\EndMyItemize}{\end{itemize}}
\newcommand{\BeginMyEnumerate}{\begin{enumerate}\setlength{\itemsep}{-\parskip}}
\newcommand{\EndMyEnumerate}{\end{enumerate}}
\newcommand{\myenumerate}[1]{\BeginMyEnumerate #1 \EndMyEnumerate}
\renewcommand{\leq}{\leqslant}
\renewcommand{\geq}{\geqslant}
\newcommand{\mypara}[1]{\vspace{10pt} \noindent \textbf{\sffamily #1}}
\theoremstyle{plain}
\newenvironment{myquote}{\list{}{\leftmargin=4mm\rightmargin=4mm}\item[]}{\endlist}
\newenvironment{claiminproof}{\begin{myquote}\noindent\emph{Claim.}}{\end{myquote}}
\newenvironment{proofinproof}{\begin{myquote}\noindent\emph{Proof.}}{\hfill $\lhd$ \end{myquote}}
\newenvironment{proofof}[1]{

\noindent{\textbf{\sffamily Proof of {#1}.}}}
{\hfill\qed

}
\newcommand{\Reals}{{\mathbb R}}
\newcommand{\Nats}{{\mathbb N}}
\DeclareMathOperator{\interior}{Int}
\DeclareMathOperator{\vol}{Vol}
\DeclareMathOperator{\radius}{radius}
\DeclareMathOperator{\dist}{dist}
\newcommand{\node}{\nu}
\newcommand{\graph}{\ensuremath{\mathcal{G}}}
\newcommand{\eps}{\varepsilon}
 \newcommand{\etal}{\emph{et al.}\xspace}
\newcommand{\NN}{\text{\sc nn}\xspace}
\newcommand{\CI}{\text{\sc ci}\xspace}
\newcommand{\optalg}{\text{\sc Opt}\xspace}
\newcommand{\objects}{P}
\newcommand{\alg}{\text{\sc alg}\xspace}
\newcommand{\family}{\mathcal{F}\xspace}
\newcommand{\sol}{\mathcal{S}}
\newcommand{\set}{S}
\newcommand{\radii}{\mathcal{R}}
\DeclareMathOperator{\OPT}{OPT}
\DeclareMathOperator{\opt}{\OPT}
\DeclareMathOperator{\nn}{nn}
\DeclareMathOperator{\cost}{cost}
\DeclareMathOperator*{\argmin}{arg\,min}
\DeclareMathOperator*{\argmax}{arg\,max}
\DeclareMathOperator{\parb}{Par}
\definecolor{mybrown}{HTML}{65000b} \definecolor{myred}{RGB}{148,26,33} \definecolor{myorange}{HTML}{d56638} \definecolor{mybeige}{HTML}{f8b878} \definecolor{mygreen}{HTML}{014421} 
\definecolor{mybrown-line}{HTML}{7f000e} \definecolor{myred-line}{HTML}{db2932} \definecolor{myorange-line}{HTML}{ef733e} \definecolor{mybeige-line}{HTML}{ffb368} \definecolor{mygreen-line}{HTML}{006d34} 
\definecolor{mygray-light-color}{HTML}{c3baba} \definecolor{mygray-medium-color}{HTML}{878486} \definecolor{mygray-dark-color}{HTML}{394044} 
\definecolor{mygray-light}{gray}{0.75} \definecolor{mygray-medium}{gray}{0.6} \definecolor{mygray-dark}{gray}{0.4}
\title{The Online Broadcast Range-Assignment Problem}
\author{Mark de Berg}{
           Department of Computing Science, TU Eindhoven,
            the Netherlands}{m.t.d.berg@tue.nl}{https://orcid.org/0000-0001-5770-3784}{Supported by the Netherlands' Organisation for     Scientific Research (NWO) under project no.~024.002.003.}
\author{Aleksandar Markovic}{
           Department of Computing Science, TU Eindhoven,  the Netherlands}{info@aleksmarkovic.com}{}{Supported by the Netherlands'  Organisation for Scientific Research (NWO) under project no.~024.002.003.}
\author{Seeun William Umboh}{
      School of Computer Science, The University of Sydney, Australia}{william.umboh@sydney.edu.au} {https://orcid.org/0000-0001-6984-4007}{Supported by NWO grant 639.022.211.}
\authorrunning{M. de Berg, A. Markovic, and S. W. Umboh} 
\keywords{Computational geometry, online algorithms, range assignment, broadcast}
\begin{document}

\maketitle

\begin{abstract}
Let $\objects=\{p_0,\ldots,p_{n-1}\}$ be a set of points in $\Reals^d$, modeling devices 
in a wireless network. A range assignment assigns a range $r(p_i)$ to each 
point $p_i\in\objects$, thus inducing a directed communication graph $\graph_r$ in which 
there is a directed edge $(p_i,p_j)$ iff $\dist(p_i, p_j) \leq r(p_i)$,
where $\dist(p_i,p_j)$ denotes the distance between~$p_i$ and~$p_j$.
The range-assignment problem is to assign the transmission ranges such
that $\graph_r$ has a certain desirable property, while minimizing the cost of the
assignment; here the cost is given by $\sum_{p_i\in\objects} r(p_i)^{\alpha}$,
for some constant~$\alpha>1$ called the distance-power gradient.

We introduce the online version of the range-assignment problem, where the points
$p_j$ arrive one by one, and the range assignment has to be updated at each
arrival. Following the standard in online algorithms, resources given out cannot be
taken away---in our case this means that the transmission ranges will never decrease.
The property we want to maintain is that $\graph_r$
has a broadcast tree rooted at the first point~$p_0$. Our results include the following.
\begin{itemize}
    \item We prove that already in $\Reals^1$, a 1-competitive algorithm does not exist.
          In particular, for distance-power gradient~$\alpha=2$ any online algorithm has
          competitive ratio at least~1.57.
    \item For points  in $\Reals^1$ and $\Reals^2$, we analyze two natural strategies for updating the 
          range assignment upon the arrival of a new point $p_j$. The strategies
          do not change the assignment if $p_j$ is already within range of an
          existing point, otherwise they increase the range of a single point, as follows:
          {\sc Nearest-Neighbor} (\NN) increases the range of $\nn(p_j)$, the nearest
          neighbor of~$p_j$, to $\dist(p_j, \nn(p_j))$,  
          and {\sc Cheapest Increase} (\CI) increases the range of
          the point $p_i$ for which the resulting cost increase to be able to reach the new point~$p_j$ is minimal.
          We give lower and upper bounds on the competitive
          ratio of these strategies as a function of the distance-power gradient~$\alpha$.
          We also analyze the following variant of \NN in $\Reals^2$ for $\alpha=2$:
          {\sc $2$-Nearest-Neighbor} (2-\NN) increases the range of $\nn(p_j)$ to $2\cdot \dist(p_j,\nn(p_j))$,
     \item We generalize the problem to points in arbitrary metric spaces, where  
           we present an $O(\log n)$-competitive algorithm.
\end{itemize}

\end{abstract}

\section{Introduction}
Consider a collection of wireless devices, each with its own transmission range.
The transmission ranges induce a directed communication network, where each 
device $p_i$ can directly send a message to any device $p_j$ in its 
transmission range. If $p_j$ is not within range, a message from $p_i$ can still 
reach $p_j$ if there is a path from $p_i$ to $p_j$ in the communication network.
The energy consumption of a device depends on its transmission range: the greater
the range, the more power is needed. 
This leads to the range-assignment problem: assign transmissions ranges to the
devices such that the resulting network has some desired connectivity property,
while minimizing the total power consumption.

Mathematically we can model the problem as follows.  
Let~$\objects=\{p_0,\ldots,p_{n-1} \}$ be a set of~$n$ points in~$\Reals^d$. For an 
assignment~$r:\objects \to \Reals_{\geqslant 0}$, let~$\graph_r$ be the 
directed graph on the vertex set~$\objects$ obtained by putting 
a directed edge from a vertex~$p_i$ to a vertex~$p_j$ iff~$\dist(p_i,p_j) \leqslant r(p_i)$,
where $\dist(p_i,p_j)$ denotes the distance between~$p_i$ and~$p_j$.
We call~$\graph_r$ the \emph{communication graph} on~$\objects$ induced by the range assignment~$r$. 
The \emph{cost} of a range assignment~$r$ is defined as 
$\cost_\alpha (r):= \sum_{p_i\in \objects} r(p_i)^\alpha$,
where~$\alpha \geq 1$ is called the \emph{distance-power gradient}. 
In practice,~$\alpha$ typically varies from~$1$ to~$6$~\cite{Pahlavan_Levesque_book}.
We then want to find a range assignment that minimizes the cost while ensuring that
$\graph_r$ has some desired property. Properties that have been investigated
in this context include strong connectivity~\cite{cps-hrpra-99,kkkp-pcprn-00},  
$h$-hop strong connectivity~\cite{cpfps-mrap-03,cps-pap-04,kkkp-pcprn-00},
broadcast capability---here $\graph_r$ must contain a broadcast tree
(that is, an arborescence) rooted at the source point~$p_0$---,
and $h$-hop broadcast capability~\cite{acilmprs-ealebhb-04,gk-imanwst-99}; 
see the survey by Clementi~\etal~\cite{Survey_Clementi}
for an overview of the various range-assignment problems. Most previous work considered the Euclidean setting. There has been some work on arbitrary metric spaces for the strong connectivity version~\cite{grandoni12, calinescu13}. (Note that while the 2-dimensional version seems the most relevant setting, the distances may
not be Euclidean due to obstacles that reduce the strength of the signal of a device.)

In this paper we focus on the broadcast version of the range-assignment problem.
This version can be solved optimally in a trivial manner when~$\alpha=1$,
by setting $r(p_0) := \max_{0\leqslant i < n} \dist(p_0,p_i)$ and $r(p_i):=0$ for~$i>0$.
Clementi~\etal~\cite{Clementi_range_assignment_1D} showed a polynomial 
time algorithm for the~$1$-dimensional problem when~$\alpha\geqslant 2$. 
Moreover, Clementi~\etal~\cite{Clementi_range_assiugnment_complexity} showed 
the problem is NP-hard for any~$\alpha > 1$ and any~$d\geq 2$. 
Clementi~\etal~\cite{Clementi_range_assignment_1D}, 
Clementi~\etal~\cite{Clementi_range_assiugnment_complexity}, 
and Wan~\etal~\cite{Wan2002} also showed that the problem 
can be approximated within a factor~$c\cdot 2^\alpha$ for any~$\alpha \geqslant 2$ 
and for a certain constant~$c$. Furthermore, 
Clementi~\etal~\cite{Clementi_range_assiugnment_complexity} showed 
that for any~$d\geqslant 2$ and for any~$\alpha \geqslant d$, 
there is a function~$f : \Nats \times \Reals \to \Reals$ 
such that the problem can be approximated within a factor~$f(d,\alpha)$  
in the~$d$-dimensional Euclidean space. 
Fuchs~\cite{Fuchs_range_assigment_hardness} 
showed that for $d=2$, the problem remains NP-hard even for so-called 
well-spread instances for any~$\alpha >1$. In dimension $d\geq 3$, he also showed that the problem is NP-hard to approximate within a factor of~$51/50$ when~$\alpha > 1$; the result also holds for well-spread instances when~$\alpha > d$. 

\subparagraph{Our contribution.}
We study the online version of the broadcast range-assignment problem. Here the points
$p_0,p_1,\ldots,p_{n-1}$ come in one by one, and the goal is to maintain a range assignment~$r$
such that $\graph_r$ contains a broadcast tree on the currently inserted points,
rooted at the first point~$p_0$. Of course one can simply recompute the assignment
from scratch, but in online algorithms one requires that resources that
have been given out cannot be taken back. For the range assignment problem 
this means that we are not allowed to decrease the range of any point. 
In fact, our algorithms have the useful property that upon arrival of each point,
we change the current range assignment only minimally: either we do not change it at
all---this happens when the newly arrived point is already within range of an existing point---or
we increase the range of only a single point. Our goal is to obtain algorithms with
a good competitive ratio.\footnote{The \emph{competitive ratio}~\cite{Sleator_Tarjan-1985} of an online algorithm
compares the cost of the solution it maintains to the cost achieved by the
optimal offline algorithm. More precisely, an online algorithm~\alg is~\emph{$c$-competitive} 
if there is a constant~$a$ such that for any instance~$I$, 
the cost of~\alg is at most~$c \cdot \opt(I) + a$. Note the the offline algorithm
must still maintain the solution over time. Thus the offline problem is not the same
as the static problem, where we only want a solution for the final configuration.}
As far as we know, the range-assignment problem
has not been studied from the perspective of online algorithms.
\medskip

We first prove a lower bound on the competitive ratio achievable by any online algorithm:
even in $\Reals^1$ there is a constant $c_{\alpha}>1$ (which depends
on the power-distance gradient~$\alpha$) such that no online algorithm can be $c_{\alpha}$-competitive.  
For $\alpha=2$, we have $c_{\alpha}>1.57$.

We then investigate the following two natural online algorithms for the broadcast range-assignment problem.
Suppose the point~$p_j$ arrives. Our algorithms all set $r(p_j) :=0$ and, as mentioned, they do not
change any of the ranges $r(p_0),\ldots,r(p_{j-1})$ if $|p_i p_j| \leq r(p_i)$ for some~$0\leq i<j$.
When $p_j$ is not within range of an already inserted point, the algorithms increase the range of 
one point, as follows. Let~$\nn(p_j)$ denote the nearest neighbor of~$p_j$ in the set~$\{ p_0,\ldots,p_{j-1} \}$,
with ties broken arbitrarily. 
\vspace*{2mm}
\begin{itemize}
\item {\sc Nearest-Neighbor} (\NN for short) increases the range of~$\nn(p_j)$ 
      to~$\dist(p_j,\nn(p_j))$.
\item {\sc Cheapest Increase} (\CI for short) increases the range of $p_{i^*}$ to $\dist(p_{i^*},p_j)$,
      where $p_{i^*}$ is a point minimizing the cost increase of the assignment, which 
      is~$\dist(p_{i^*},p_j)^{\alpha}-r(p_{i^*})^{\alpha}$
      where $r(p_{i^*})$ denotes the current range of~$p_{i^*}$.
\end{itemize}

\begin{table}[h]
\begin{tabular}{c|c|c|c}
 dimension & distance-power gradient & lower bound for \NN  & upper bound for \NN and \CI \\
\hline
$d=1$   &  $\alpha=2$ &      2          &   2   \\
\hline
\multirow{4}{*}{$d=2$}   &  \multirow{2}{*}{$\alpha=2$} &  $\approx 7.61$  &   322 \\
        &             &       &  2-\NN:  36      \\
\cline{2-4}
        &  $2<\alpha<\alpha^*\approx 4.3$ & \multirow{2}{*}{$\approx 6(1+0.52^\alpha)$} &  $\alpha \frac{2^\alpha -3}{2^{\alpha-1}-\alpha}$    \\
        &  $\alpha\geq\alpha^*\approx 4.3$ &   & $\approx 12.94$  \\
\end{tabular}
\caption{Overview of results on \NN and \CI.}
\label{table:results}
\end{table}
The results are summarized in Table~\ref{table:results}.
Note the lower bounds hold only for \NN, while the upper bounds hold for \NN and \CI;
the exception is the third row, which is for 2-\NN (see below). The lower bound of $6(1+0.52^\alpha)$
mentioned in the table---the exact bound is $6(1+(\frac{\sqrt{6}-\sqrt{2}}{2})^\alpha)$---applies to all $\alpha>1$, 
and thus implies the given lower bound for $\alpha=2$.
Recall that for $d=1$ and $\alpha=2$, we also have a universal lower bound of 1.57 that holds for any online algorithm
and, hence, also for \CI. The exact value of $\alpha^*$ is $\alpha^* = \argmin  F^*_\alpha$, where
$F^*_\alpha=\alpha \frac{2^\alpha -3}{2^{\alpha-1}-\alpha}$.

As can be seen in the table \NN is $O(1)$-competitive for $\alpha=2$, but the competitive ratio
is quite large, namely 322. We therefore also analyze the following variant of~$\NN$,
which (if $p_j$ is not yet within range of an existing point) proceeds as follows:
\vspace*{2mm}
\begin{itemize}
\item {\sc 2-Nearest-Neighbor} (2-\NN for short) increases the range of~$\nn(p_j)$ 
	  to $2\cdot \dist(p_j,\nn(p_j))$.
\end{itemize} 
\vspace*{2mm}
We prove that the competitive ratio of 2-\NN is at most~$36$ for~$\alpha=2$. Thus, while still rather large, 
the competitive ratio is a lot smaller than what we were able to prove for~\NN.
It is interesting to note that both \NN and~2-\NN make decisions that are
independent of~$\alpha$. Hence, \NN obtains a solution that is
simultaneously competitive for all $\alpha \geq 2$. 
\medskip

As a final contribution we generalize the broadcast problem to points in arbitrary metric spaces. 
Since to the best of our knowledge this version has not been studied before, 
we present 
\ifFullVersion 
\else
in the full version
\fi an approximation algorithm for the offline setting; 
its approximation ratio is $5^{\alpha}$. In this offline setting the algorithm must
be what Boyar~\etal~\cite{befkl-ods-19} call an \emph{incremental algorithm}:
an algorithm that, even though it may know the future, maintains a feasible solution at any time.
For the online setting (where the future is unknown) we obtain an
$O(4^{\alpha}\log n)$-competitive algorithm. 

\subparagraph{Notation.}
We let $\objects := p_0,\ldots,p_{n-1}$ denote the input sequence, where we
assume without loss of generality that~$p_i$ is inserted at time~$i$ and that all $p_i$ are distinct.
Define~$\objects_i:= p_0,\ldots,p_i$, and 
denote the range of a point $p_i\in \objects_j$ just after the insertion
of the point $p_j$ by $r_j(p_i)$. Thus in the online version we require
that $r_j(p_i) \leqslant r_{j+1} (p_i)$. For an algorithm~\alg we use
$\cost_{\alpha}(\alg(\objects))$ to denote the cost incurred by \alg on input~$\objects$
for distance-power gradient~$\alpha$.
Finally we denote the ball of radius $\rho$ centered at a point $p$ by
$B(p,\rho)$; note that in $\Reals^1$ this is an interval of length~$2\rho$
and in $\Reals^2$ it is a disk of radius~$\rho$. \section{Online range-assignment in $\Reals^1$}
\label{se:1D}
In this section we prove that no online algorithm can have a competitive ratio 
arbitrarily close to~$1$, even in~$\Reals^1$. We also prove bounds on the competitive ratio of 
\NN and \CI in~$\Reals^1$. 

\mypara{A universal lower bound.}
To prove the lower bound we consider an arbitrary online algorithm~\alg.
Our adversary then first presents the points $p_0=0$, $p_1=x$, and $p_2:=\delta_\alpha \cdot x$. 
Depending on the range assignment \alg has done so far, the adversary either ends the instance
or presents a fourth point $p_3=-\delta_\alpha \cdot x$. By picking a suitable value
for $\delta_\alpha$ and making $x$ sufficiently large, we can obtain a lower bound.
This is made precise in the following theorem. 
\begin{theorem} \label{th:universal-lb}
For any distance-power gradient~$\alpha>1$, there is a constant~$c_\alpha>1$ such that any 
online algorithm for the range assignment problem in~$\Reals^1$ has a competitive ratio of at least~$c_\alpha$. 
For~$\alpha=2$ this constant is~$c_2 \approx 1.58$. 
\end{theorem}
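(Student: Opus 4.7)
The plan is for the adversary to present $p_0=0$, $p_1=x$, and $p_2=\delta x$, with $\delta = \delta_\alpha > 1$ to be chosen, and to conditionally release $p_3 = -\delta x$ depending on the state of the online algorithm~$\alg$ at time~$2$. Since each ratio derived below is strictly greater than~$1$, letting $x \to \infty$ absorbs the additive constant in the competitive-ratio definition. After $p_1$ arrives, $p_0$ can reach $p_1$ only directly, forcing $r(p_0) \geq x$. After $p_2$ arrives, either \textbf{(A)} $r(p_0) \geq \delta x$ (direct edge), or \textbf{(B)} $r(p_0) < \delta x$ together with $r(p_1) \geq (\delta-1)x$ (two-hop through $p_1$); these cases are exhaustive because $p_0$ and $p_1$ are the only vertices available to reach $p_2$. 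If $\alg$ is in Case~A the adversary stops; otherwise it presents $p_3$.

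In Case~A the algorithm pays at least $\delta^\alpha x^\alpha$, while the offline optimum on $\{p_0,p_1,p_2\}$ is at most $(1+(\delta-1)^\alpha)x^\alpha$ by using the two-hop strategy, yielding a ratio of at least $f_A(\delta) := \delta^\alpha/(1+(\delta-1)^\alpha)$. In Case~B the algorithm has already spent at least $(1+(\delta-1)^\alpha)x^\alpha$ by time~$2$; to reach $p_3$ afterwards it must raise some range, so its final cost is at least
\[
    \min\bigl\{\, \delta^\alpha+(\delta-1)^\alpha,\ 1+(1+\delta)^\alpha,\ 1+(\delta-1)^\alpha+(2\delta)^\alpha \,\bigr\}\cdot x^\alpha,
\]
corresponding respectively to raising $r(p_0)$, $r(p_1)$, or $r(p_2)$. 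The offline optimum on $\{p_0,\ldots,p_3\}$ is at most $\delta^\alpha x^\alpha$ (direct broadcast from $p_0$), so the Case-B ratio, which I denote $f_B(\delta)$, is at least the above minimum divided by $\delta^\alpha$.

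To finish I would select $\delta_\alpha$ so as to balance $f_A$ and $f_B$. Both functions are continuous and strictly greater than~$1$ for a suitable range of $\delta > 1$ (for $f_B$ this is immediate; for $f_A$ it follows from $\delta^\alpha - (\delta-1)^\alpha \to \infty$ when $\alpha>1$), and $f_B$ is increasing while $f_A$ eventually decreases, so a crossing exists at which $\min(f_A,f_B) > 1$; this common value serves as $c_\alpha$. For $\alpha = 2$ a short numerical check puts the crossing near $\delta_2 \approx 4.15$, giving $c_2 \approx 1.58$ as claimed. The main obstacle is verifying that the first term in the Case-B minimum is indeed dominant at $\delta_\alpha$, so that $f_B$ simplifies to $1 + ((\delta-1)/\delta)^\alpha$; this requires comparing the three options across values of $\alpha$ (one can already check that $\delta^\alpha+(\delta-1)^\alpha < 1+(\delta-1)^\alpha+(2\delta)^\alpha$ unconditionally, leaving only the comparison with $1+(1+\delta)^\alpha$), and one must also handle the regime $\alpha \to 1^+$ carefully, where both $f_A$ and $f_B$ approach~$1$ and $\delta_\alpha$ needs to be taken large enough to keep $c_\alpha > 1$.
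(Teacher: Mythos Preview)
Your proposal is correct and follows essentially the same adversary argument as the paper, including the same choice of points $p_0=0$, $p_1=x$, $p_2=\delta x$, $p_3=-\delta x$ and the same case split on whether the algorithm has assigned a range of at least $\delta x$ after $p_2$ (the paper tacitly omits the option of raising $r(p_2)$, which you correctly observe is dominated). The obstacles you flag are not genuine: the paper simply defines $c_\alpha := \max_{\delta>1}\min\bigl(f_A(\delta),\, (\delta^\alpha+(\delta-1)^\alpha)/\delta^\alpha,\, (1+(\delta+1)^\alpha)/\delta^\alpha\bigr)$ and notes that each term is strictly greater than~$1$ for every $\delta>1$ and $\alpha>1$, so $c_\alpha>1$ follows without having to determine which branch is tight or to treat $\alpha\to 1^+$ separately.
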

\begin{proof}
Let~$\alpha > 1$ and let~\alg be an algorithm with competitive 
ratio~$c \geqslant 1$, i.e., there is a constant~$a$ such that the 
cost of~\alg is upper bounded by~$c\cdot \opt + a$. 
We also define 
\begin{align*}
c_\alpha :&= \max_{\delta>1} \min \left(
		\frac{\delta^\alpha}{1+(\delta-1)^\alpha}, 
		\frac{\delta^\alpha + (\delta-1)^\alpha}{\delta^\alpha}, 
		\frac{1 + (\delta+1)^\alpha}{\delta^\alpha}
		\right), \\
\mbox{and }\quad
\delta_\alpha :&= \argmax_{\delta>1}  \min \left(
		\frac{\delta^\alpha}{1+(\delta-1)^\alpha}, 
		\frac{\delta^\alpha + (\delta-1)^\alpha}{\delta^\alpha}, 
		\frac{1 + (\delta+1)^\alpha}{\delta^\alpha}
		\right).
\end{align*}
We show that~$c \geqslant c_\alpha$ by constructing 
the following families of instances consisting of, respectively, three and four points, 
and parametrized by the real number~$x\geqslant 1 $:
\begin{align*}
&\family_1:=\{ \{p_0=0, p_1=x, p_2=\delta_\alpha \cdot x \} \} \\
\mbox{and }\quad &\family_2:=\{ \{p_0=0,  p_1=x, p_2=\delta_\alpha \cdot x, p_3=-\delta_\alpha \cdot x \} \}. 
\end{align*}
Note that there is a one-to-one correspondence 
between the instances in both families: each instance of~$\family_1$ is 
the beginning of exactly one instance of~$\family_2$ and each instance of~$\family_2$ 
starts like exactly one instance of~$\family_1$. 

For any~$x$, depending on what~\alg 
does after~$p_2$ is inserted, we choose an instance from either 
the family~$\family_1$ or the family~$\family_2$ using the 
following rule: if after~$p_2$ is inserted,~\alg has a disk of radius 
at least~$\delta_\alpha \cdot x$, we choose~$\family_1$, otherwise we choose~$\family_2$. 
In the former case,~\alg pays at least~$\delta_\alpha^\alpha \cdot x^\alpha$ while the optimal solution 
would be to place a disk of radius~$x$ at~$p_0$ 
and a disk of radius~$(\delta_\alpha - 1) \cdot x $ at~$p_1$ and 
pay~$ x ^\alpha + (\delta_\alpha - 1)^\alpha \cdot x ^\alpha $. 
Since the competitive ratio of~\alg is~$c$, 
we have 
that~$\delta_\alpha^\alpha \cdot x^\alpha 
		\leqslant c\cdot x ^\alpha( 1+ (\delta_\alpha - 1)^\alpha)  + a$ and hence 
\begin{align*}
c \geqslant \frac{
	\delta_\alpha^\alpha 
	}{
	1 + (\delta_\alpha - 1)^\alpha 
	}
	-\frac{
	a
	}{
	x ^\alpha ( 1 + (\delta_\alpha - 1)^\alpha)
	}.
\end{align*}
Since the second term can be made arbitrarily 
small by choosing~$x$ large enough,~$c$ must be 
at least~$\frac{
	\delta_\alpha^\alpha 
	}{
	(\delta_\alpha - 1)^\alpha 
	}$.

In the latter case,~\alg 
has one disk of radius at least~$x$ and one of radius at least~$(\delta_\alpha - 1) \cdot x $ 
before~$p_3$ is inserted. 
We split this case into two subcases: in the first one, \alg increases the 
radius of the disk at~$p_0$ and in the second one, \alg increases the radius of the disk at~$p_1$. 
The cost~\alg has to pay after~$p_3$ has been inserted is at 
least either~$\delta_\alpha^\alpha \cdot x^\alpha + (\delta_\alpha -1 )^\alpha \cdot x^\alpha$ in the first 
subcase, 
or~$x^\alpha + (\delta_\alpha +1 )^\alpha \cdot x^\alpha$ in the second, 
whereas the optimal solution for both subcases would 
be to place only one disk of radius~$\delta_\alpha \cdot x$ at~$p_0$ and pay~$\delta_\alpha^\alpha \cdot x^\alpha$. 
Since the competitive ratio of~\alg is~$c$, 
we have 
that~$\delta_\alpha^\alpha \cdot x^\alpha + (\delta_\alpha -1 )^\alpha \cdot x^\alpha 
		\leqslant c \cdot \delta_\alpha^\alpha \cdot x^\alpha + a$ for the first subcase and hence 
\begin{align*}
c \geqslant \frac{
	\delta_\alpha^\alpha + (\delta_\alpha -1 )^\alpha 
	}{
	\delta_\alpha^\alpha 
	}
	-\frac{
	a
	}{
	\delta_\alpha^\alpha \cdot x^\alpha
	};
\end{align*}
and that~$x^\alpha + (\delta_\alpha +1 )^\alpha \cdot x^\alpha
		\leqslant c\delta_\alpha^\alpha \cdot x^\alpha + a$ for the second subcase and hence 
\begin{align*}
c \geqslant \frac{
	1 + (\delta_\alpha +1 )^\alpha 
	}{
	\delta_\alpha^\alpha 
	}
	-\frac{
	a
	}{
	\delta_\alpha^\alpha \cdot x^\alpha
	}.
\end{align*}
Since, in both subcases, the second term can be made arbitrarily 
small by choosing~$x$ large enough,~$c$ must be 
at least~$\frac{
	\delta_\alpha^\alpha + (\delta_\alpha -1 )^\alpha 
	}{
	\delta_\alpha^\alpha 
	}$ for the first subcase, 
and at least~$\frac{
	1 + (\delta_\alpha +1 )^\alpha 
	}{
	\delta_\alpha^\alpha 
	}$, otherwise there is an infinite family 
of instances contradicting the competitive ratio 
for these two subcases. 

Therefore, the competitive ratio of~\alg must 
be at least the minimum of the competitive ratio 
between these cases, which is exactly~$c_\alpha$. 
Even though it is not clear how to compute the value of~$c_\alpha$ for any fixed~$\alpha>1$, 
it is easy to see it is strictly bigger than~$1$. 
If~$\alpha = 2$, we have 
\begin{align*}
c_2 :&= \max_{\delta>1} \min \left(
		\frac{\delta^2}{1+(\delta-1)^2}, 
		\frac{\delta^2 + (\delta-1)^2}{\delta^2}, 
		\frac{1 + (\delta+1)^2}{\delta^2}
		\right) \\
	&= \frac{1}{12}\left( 
		4 + \sqrt[3]{496-24\sqrt{183}} + 2 \sqrt[3]{62 + 3 \sqrt{183}}
		\right) \\
	&\approx 1.58 
\end{align*}
which is achieved for 
\begin{align*}
\delta_2 :&= \argmax_{\delta>1}  \min \left(
		\frac{\delta^2}{1+(\delta-1)^2}, 
		\frac{\delta^2 + (\delta-1)^2}{\delta^2}, 
		\frac{1 + (\delta+1)^2}{\delta^2}
		\right) \\
	&= \frac{1}{3} \left(
		5 + \sqrt[3]{62-3\sqrt{183}} +  \sqrt[3]{62 + 3 \sqrt{183}}
		\right) \\
	&\approx 4.15.
\end{align*} 
\end{proof}

\mypara{Bounds for \NN and \CI{}.}
We now prove bounds on the competitive ratio of the algorithms \NN  and \CI
explained in the introduction. 
\begin{theorem}
Consider the range-assignment problem in~$\Reals^1$ with distance-power
gradient~$\alpha$.
\vspace*{-12mm}
\begin{quotation} \noindent
\myenumerate{
\item[{\rm\sf (i)}]  For any~$\alpha>1$, the competitive ratio of \CI is at most~$2$. 
\item[{\rm\sf (ii)}]  For any~$\alpha>1$, the competitive ratio of \NN is exactly~$2$.
}
\end{quotation}
\end{theorem}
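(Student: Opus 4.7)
The plan has two parts: an explicit family for the lower bound on \NN, and a common charging argument for the upper bound of $2$ on both \NN and \CI.

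For the lower bound, I would consider the family
\[
  p_0 = 0, \quad p_1 = 1, \quad p_2 = x, \quad p_3 = -x
\]
parameterized by $x > 1$. Tracing \NN one obtains $r(p_0) = 1$ after $p_1$; then $r(p_1) = x-1$ after $p_2$ (since $|p_1 p_2| = x-1 < x = |p_0 p_2|$, so $\nn(p_2) = p_1$); and finally $r(p_0) = x$ after $p_3$ (since $|p_0 p_3| = x < x+1 = |p_1 p_3|$, so $\nn(p_3) = p_0$). The resulting cost is $x^\alpha + (x-1)^\alpha$. With foresight of $p_3$, OPT presets $r(p_0) = x$ at the outset---feasible at every intermediate time because every arrival lies within distance $x$ of $p_0$---for cost $x^\alpha$. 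The ratio $1 + (1 - 1/x)^\alpha$ tends to $2$ as $x \to \infty$, and since the instance scales freely, no additive constant can bring the competitive ratio below $2$.

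For the upper bound, let $r^*$ be OPT's final range assignment and, for each step $j \geq 1$ at which the algorithm increases some range, let $d_j := \dist(p_j, \nn(p_j))$ at the time $p_j$ arrives. Both algorithms have per-step cost increase at most $d_j^\alpha$: \NN by construction, and \CI because extending $\nn(p_j)$ is one of the options it minimizes over. It thus suffices to prove
\[
  \sum_{j \geq 1,\ \text{paying}} d_j^\alpha \ \le\ 2 \cdot \opt.
\]
The plan is a charging scheme. For each paying step $j$, OPT's online feasibility at time $j$ yields some $p_k \in P_{j-1}$ realizing an edge into $p_j$ in $\graph_{r^*_j}$, whence $d_j \le |p_j p_k| \le r^*(p_k)$ by the nearest-neighbor property. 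I would group the charges by which side of $p_k$ (in the $1$-dimensional sorted order) $p_j$ lies on, and argue that each $r^*(p_k)^\alpha$ is charged at most once per side: once a paying arrival cuts an interval adjacent to $p_k$ on one side, any subsequent insertion on that side has a strictly closer already-inserted neighbor and must be charged to a different OPT point.

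The principal obstacle is making the one-charge-per-side argument rigorous despite the subtlety that OPT's online parent of $p_j$ at time $j$ may differ from OPT's final parent of $p_j$. One must show that the choice of online parent can be made compatibly with the sorted-order refinement; this is precisely where the $\Reals^1$ geometry is essential and where the analysis would break in higher dimensions.
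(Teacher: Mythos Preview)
Your lower-bound instance is correct and is essentially the paper's construction (the paper parametrizes $p_1=\delta x$ and sends $\delta\to 0$; you fix $p_1=1$ and send $x\to\infty$, which is the same limit after rescaling).

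The upper-bound plan, however, has a real gap, and it is precisely the one you flag. The ``at most once per side'' claim is simply false. Take $p_0=0$, $p_1=1$, $p_2=2$, $p_3=3$, $p_4=-3$. Then \NN pays at every step with $d_1=d_2=d_3=1$ and $d_4=3$. The unique optimum is $r^*(p_0)=3$ and $r^*(p_i)=0$ for $i>0$, so the only online OPT-parent of every $p_j$ is $p_0$, and $p_0$ is charged three times on its right. More generally, whenever OPT uses a single large ball at the source, every paying arrival on one side is forced to charge the same point. So you cannot bound the \emph{number} of charges per side; at best you could try to bound the \emph{total} charge on one side by $r^*(p_k)^\alpha$, but proving that requires controlling which arrivals are actually paying for \NN (resp.\ \CI), and that is exactly the missing work. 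Note also that the set of paying steps differs between \NN and \CI, so a single charging argument over ``paying $j$'' does not automatically cover both.

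The paper sidesteps per-OPT-ball charging entirely. It first proves a stronger fact: on one-sided instances (all $p_j\geq 0$), \NN and \CI coincide and are \emph{optimal}, because a new rightmost point at gap $g$ forces every algorithm---including \optalg---to pay at least $g^\alpha$, while a non-rightmost arrival is already covered. For general inputs it splits $P$ into the nonnegative subsequence $P^+$ and the nonpositive subsequence $P^-$, argues $\cost_\alpha(\NN(P))\leq \cost_\alpha(\NN(P^+))+\cost_\alpha(\NN(P^-))$ and $\opt(P)\geq\max\{\opt(P^+),\opt(P^-)\}$, and combines these with one-sided optimality to get the factor~$2$ (similarly for \CI). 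This global left/right decomposition is what is really doing the work in $\Reals^1$, and it replaces your attempted local per-ball accounting.
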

\begin{proof}
We first prove the upper bounds. 
Assume without loss of generality that~$p_0=0$.
We first prove that both NN and \CI perform optimally for~$\alpha>1$ on any 
sequence~$p_0,p_1,\ldots,p_{n-1}$ with $p_j\geq 0$ for all $1\leq j<n$.
\begin{claiminproof}
Suppose $p_0=0$ and $p_j\geq 0$ for all $1\leq j<n$. Then 
\NN and \CI are optimal.
\end{claiminproof}
\begin{proofinproof}
We first observe that  for any point $p_j$ the following holds for the
graph~$\graph_{r_j}$ that we have after the insertion of $p_j$: for any
point $p_i$ with $0<i\leq j$ there is a path from the source~$p_0$ to~$p_i$ 
that only uses edges directed from left to right, that is, edges~$(p_{i'},p_{i''})$
with $p_{i'}< p_{i''}$. Indeed, if the path uses an edge~$(p_{i'},p_{i''})$
with $p_{i'}> p_{i''}$ then the subpath from $p_0$ to $p_{i'}$ must contain an
edge~$(p_s,p_t)$ with $p_s\leq p_{i''}\leq p_t$, and then we can go directly
from $p_s$ to $p_{i''}$. This observation implies that there exists an optimal 
strategy \optalg such that the balls $B(p_i,r_j(p_i))$ of the currently inserted 
points never extend beyond the currently rightmost point, a property which holds for \NN and \CI as well.
(Intuitively, the part of $B(p_i,r_j(p_i))$ to the right of the rightmost point is currently
useless, and the part of $B(p_i,r_j(p_i))$ to the left of $p_i$ is not
needed because we never need edges going to the left. Hence, we decrease~$r_j(p_i)$
until the right endpoint of $B(p_i,r_j(p_i))$ coincides with the currently rightmost
point, and increase the range of $p_i$ later, as needed.) 

Now imagine running \NN, \CI, and \optalg simultaneously on~$\objects$.
We claim that \NN and \CI do exactly the same, and that their cost increase
after the insertion of any point~$p_j$ is at most the cost increase of~\optalg.
To see this, let $p_{j'}$ be the rightmost point just before inserting~$p_j$.
If $p_j < p_{j'}$ then \NN and \CI do not increase any range---since $p_{j'}$
is reachable from $p_0$, the point $p_j$ must already be reachable as well---and so
the cost increase is zero. If $p_j>p_{j'}$ then \NN and \CI both increase
the range of $p_{j'}$ from 0 to $p_j-p_{j'}$. For \NN this is clear. 
For \CI it follows from the fact that $\alpha>1$. Indeed, increasing the range
of some $p_i<p_{j'}$ gives a cost increase 
$(r_{j-1}(p_i)+x+(p_j-p_{j'}))^\alpha - (r_{j-1}(p_i))^\alpha$,
for some $x\geq 0$. This is more than $(p_j-p_{j'})^\alpha$, since we must
have $r_{j-1}(p_i) +x>0$. By a similar reasoning, and using that the
balls of \optalg do not extend beyond $p_{j'}$, we conclude that the cost increase
of \optalg cannot be smaller than $(p_j-p_{j'})^\alpha$.
Hence, \NN and \CI are optimal on a sequence of non-negative points.
\end{proofinproof}
Next, we prove that the optimality for non-negative points gives a competitive ratio of 
at most~$2$ for any input sequence~$\objects$. Let~$\objects^+$ and $\objects^-$ denote
the subsequences of~$\objects$ consisting of the points with non-negative 
and non-positive points, respectively.
Note that the source point~$p_0=0$ is included in both subsequences. We claim 
that~$\cost_{\alpha}(\optalg((\objects)) \geq \cost_{\alpha}(\optalg((\objects^+))$. 
Indeed, we can modify the optimal solution for~$\objects$ to a valid 
solution for $\objects^+$ whose cost is at most~$\cost_{\alpha}(\optalg((\objects))$, 
as follows: whenever the range of a point~$p_i \not\in \objects^+$ is increased 
to reach a point $p_j\in \objects^+$, we instead increase the range of $p_0$ by the 
same amount. A similar argument gives $\cost_{\alpha}(\optalg((\objects)) \geq \cost_{\alpha}(\optalg((\objects^-))$. 

We now argue that~$\cost_\alpha(\NN(\objects)) \leq \cost_{\alpha}(\NN(\objects^+))+ \cost_\alpha(\NN(\objects^-))$
and, similarly, that $\cost_\alpha(\CI(\objects)) \leq \cost_{\alpha}(\CI(\objects^+))+ \cost_\alpha(\CI(\objects^-))$.

Imagine running $\NN$ simultaneously on $\objects$, on $\objects^+$ and on $\objects^-$.
We claim that the increase of $\cost_\alpha(\NN(\objects))$ upon the arrival of a new point~$p_j$
is at most the increase of $\cost_{\alpha}(\NN(\objects^+))$ if $p_j>0$, and 
at most the increase of $\cost_{\alpha}(\NN(\objects^-))$ if $p_j<0$.
To see this, assume without loss of generality that $p_j>0$ and suppose 
the increase of $\cost_\alpha(\NN(\objects))$ is non-zero. Then $p_j$
lies to the right of the currently rightmost point,~$p_{i}$. Both $\NN(\objects)$
and $\NN(\objects^+)$ then increase the range of~$p_i$, and pay the same cost.
The only exception is when $i=0$, that is, $p_j$ is the first point with $p_j>0$.
In this case $\NN(\objects)$ may pay less than $\NN(\objects^+)$, since
$\NN(\objects)$ could already have increased the range of $p_0$ due to arrivals of points
to the left of~$p_0$. 

A similar argument works for \CI. Indeed,
$\CI(\objects^+)$ and $\CI(\objects^-)$ never extend a ball beyond the currently rightmost
and leftmost point, respectively. Hence, when the new point $p_j$ lies, say,
to the right of the currently rightmost point~$p_i$, then $\CI(\objects^+)$
would pay $(\dist(p_i,p_j))^\alpha$. Since $\CI(\objects)$ also has the option
to increase the range of $p_i$, it will never pay more.

Hence, for \NN---a similar computation holds for \CI---we get
\[
\cost_\alpha(\NN(\objects)) 
    \leq \cost_{\alpha}(\NN(\objects^+))+ \cost_\alpha(\NN(\objects^-))
    \leq 2\cdot \opt(\objects).
\]
It remains to prove the lower bound for part~(ii) of the theorem. Assume for a contradiction that
there is a constant~$a$ such that for all inputs~$\objects$ we have
$\cost_{\alpha}(\NN(P)) \leq (2-\eps) \cdot \cost_{\alpha}(\optalg(P))+a$. 
Consider the input $p_0=0$, $p_1=\delta x$, $p_2=x$, and $p_3=-x$,  
for some~$\delta \in (0,1]$ and $x>0$ to be determined later. 
The optimal solution has $r_3(p_0)=x$ and $r_3(p_1)=r_3(p_2)=r_3(p_3)=0$, 
while \NN has $r_3(p_0)=x$ and $r_3(p_1)=(1-\delta)x$ and $r_3(p_2)=r_3(p_3)=0$. 
Hence, the competitive ratio that \NN achieves on this instance is 
\[
c = 
\frac{((1-\delta)^\alpha +1) x^\alpha - a}{x^\alpha} = 
	(1-\delta)^\alpha +1 
	- \frac{a}{x^\alpha}, 
\]
which is larger than~$2-\varepsilon$ when we pick~$\delta$ sufficiently small 
and~$x$ sufficiently large. 
\end{proof}
 \section{Online range-assignment in $\Reals^2$}
\subsection{Bounds on the competitive ratio of \NN and \CI when~$\alpha >2$}
As before, let~$p_0,\ldots,p_{n-1}$ be the sequence of inserted points, 
with~$p_0$ being the source point.
Consider a fixed point~$p_i$, and a disk $D$ centered at $p_i$---the disk~$D$ need not 
have radius equal to the range of~$p_i$. Define
$\set(p_i,D) := \{ p_j : j\geq i \mbox{ and } p_j\in D\}$ to be the set
containing $p_i$ plus all points arriving after $p_i$ that lie in $D$. For a 
point $p_j$, define $\cost_{\alpha}(\NN{},p_j)$ to be the cost 
incurred by \NN when $p_j$ is inserted; in other words,~$\cost_{\alpha}(\NN,p_j) := 0$
when $p_j$ falls into an existing disk $B(p_i,r_{j-1}(p_i))$, and $\cost_{\alpha}(\NN,p_j) :=(r_{j}(p_k))^\alpha - (r_{j-1}(p_k))^\alpha$
otherwise, where~$p_k := \nn(p_j)$.
Define $\cost_{\alpha}(\CI,p_j)$ 
similarly for \CI{}. Finally, for $p_j\in \set(p_i,D)$ define 
\[
F_{\alpha} (p_j) = F_{\alpha} (p_j; p_i, D) 
		:= \min \{ \dist(p_j,p_k)^{\alpha} 
		\mid p_k \in \set(p_i,D) \mbox{ and } k < j \}.
\]
The next lemma shows that we can use the function $F_{\alpha}$ to 
upper bound the cost of \NN and \CI{}. We later apply this lemma 
to all disks in an optimal solution to bound the competitive ratio. 
Note that $\cost_\alpha(\NN,p_j) \leq F_\alpha(p_j)$. Indeed, \NN either pays 
zero (when $p_j$ already lies inside a disk) or it expands the disk of $p_j$'s 
nearest neighbor (which may or may not lie in $D$) which costs at 
most $F_\alpha(p_j)$. Similarly $\cost_\alpha(\CI,p_j) \leq F_\alpha(p_j)$. 
Hence we have:
\begin{lemma}\label{lem:NN-CI<F_alpha}
Let $p_i$ be any input point and $D$ a disk centered at~$p_i$. Then for any subset $S(D)\subseteq \set(p_i,D)\setminus \{p_i\}$ we have:
\[
\sum_{p_j\in \set(D) } \cost_{\alpha}(\NN ,p_j) \leq \sum_{p_j\in \set(D)  } F_{\alpha}(p_j) \qquad
\mbox{and}\qquad 
\sum_{p_j\in \set(D) } \cost_{\alpha}(\CI ,p_j) \leq \sum_{p_j\in \set(D)  } F_{\alpha}(p_j).
\]
\end{lemma}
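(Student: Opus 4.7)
The plan is to obtain both displayed inequalities as immediate sums of the per-point bounds $\cost_\alpha(\NN, p_j) \leq F_\alpha(p_j)$ and $\cost_\alpha(\CI, p_j) \leq F_\alpha(p_j)$ that are already sketched in the paragraph just before the lemma. A preliminary check is that $F_\alpha(p_j)$ is actually well-defined for every index of the sum: since $\set(D) \subseteq \set(p_i,D)\setminus\{p_i\}$ and $p_i$ itself lies in $\set(p_i,D)$ with $i<j$, the set over which the minimum in the definition of $F_\alpha(p_j)$ ranges is non-empty.

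For \NN I would argue as follows. If $p_j$ already lies in some existing ball $B(p_k,r_{j-1}(p_k))$, then \NN pays zero on $p_j$ and the bound is trivial. Otherwise \NN enlarges $\nn(p_j)$, the nearest neighbour of $p_j$ among all earlier points $\{p_0,\ldots,p_{j-1}\}$, to radius $\dist(p_j,\nn(p_j))$, paying exactly $\dist(p_j,\nn(p_j))^\alpha$. Since $\{p_k\in\set(p_i,D):k<j\}$ is a subfamily of $\{p_0,\ldots,p_{j-1}\}$, the minimum of $\dist(p_j,p_k)^\alpha$ over this subfamily cannot be smaller, which is precisely the inequality $\dist(p_j,\nn(p_j))^\alpha \leq F_\alpha(p_j)$.

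For \CI, either \CI pays zero on $p_j$ (when $p_j$ is already covered) or its cost equals
\[
\min_{k<j}\bigl(\dist(p_k,p_j)^\alpha - r_{j-1}(p_k)^\alpha\bigr) \;\leq\; \min_{k<j}\dist(p_k,p_j)^\alpha \;\leq\; F_\alpha(p_j),
\]
where the first inequality uses $r_{j-1}(p_k)\geq 0$ and the second restricts the minimum to those indices $k$ with $p_k\in\set(p_i,D)$. Summing either per-point inequality over $p_j\in\set(D)$ yields the two statements of the lemma. There is no substantive obstacle: this is purely a bookkeeping step whose real utility will only become apparent later, when $D$ is instantiated as a disk from an optimal solution and $\set(D)$ as the set of points that one wishes to charge to that disk.
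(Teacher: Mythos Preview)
Your proposal is correct and follows exactly the argument the paper gives in the paragraph immediately preceding the lemma: establish the per-point bound $\cost_\alpha(\cdot,p_j)\le F_\alpha(p_j)$ (zero if already covered, otherwise compare the nearest-neighbour distance over all earlier points with the minimum over the subfamily $\set(p_i,D)$), and then sum. One tiny slip: when \NN{} does act it pays $\dist(p_j,\nn(p_j))^\alpha - r_{j-1}(\nn(p_j))^\alpha$, not ``exactly $\dist(p_j,\nn(p_j))^\alpha$'', but since the previous range is nonnegative this is still $\le \dist(p_j,\nn(p_j))^\alpha$ and your inequality goes through unchanged.
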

Lemma~\ref{lem:NN-CI<F_alpha} suggests the following strategy to bound the 
competitive ratio of \NN (and \CI). Consider, for each point~$p_i$, the final 
disk~$D$ placed at~$p_i$ in an optimal solution, and let~$\rho$ be its radius. 
The cost of this disk is~$\rho^\alpha$. We charge the cost of the disks placed 
by \NN (or \CI) at points~$p_j$ inside~$D$---this cost can be bounded using the 
function~$F_{\alpha}$, by Lemma~\ref{lem:NN-CI<F_alpha}---to the 
cost of~$D$. This motivates the following definition:
\begin{align*}
F^*_\alpha := \max \frac{1}
	{\rho^\alpha}\sum_{p_j\in \set(D) } F_{\alpha}(p_j), 
\end{align*}
where the maximum is over any possible input instance $\objects$,
any point~$p_i\in\objects$, any disk~$D$ of radius~$\rho$ centered at $p_i$, and any subset $S(D)\subseteq \set(p_i,D)\setminus \{p_i\}$. The value $F^*_\alpha$
bounds the maximum total charge to any disk $D$
in the optimal solution, relative to $D$'s cost $\rho^\alpha$.
The next lemma shows that for~$\alpha>2$, the value~$F^*_{\alpha}$ 
is bounded by a constant (depending on~$\alpha$).
\begin{lemma}\label{lem:ub-F_alpha}
We have that $F^*_\alpha \leqslant \alpha \frac{2^\alpha -3}{2^{\alpha-1}-\alpha}$ 
for any~$\alpha > 2$.
\end{lemma}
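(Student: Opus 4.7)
The plan is to combine two standard ingredients: a planar packing estimate and the layer-cake formula. Write $d_j := F_\alpha(p_j)^{1/\alpha}$ for each $p_j \in \set(D)$, so that $d_j$ is the distance from $p_j$ to its nearest predecessor in $\set(p_i, D)$. Two structural facts follow directly from the definitions: (a) since $p_i \in \set(p_i, D)$ is always a legal predecessor for any $p_j \in \set(D)$, we have $d_j \leq \dist(p_j, p_i) \leq \rho$; (b) for any $p_{j_1}, p_{j_2} \in \set(D)$ with $j_1 < j_2$, we have $\dist(p_{j_1}, p_{j_2}) \geq d_{j_2}$ straight from the definition of $F_\alpha$. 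Consequently, the set $\{p_j \in \set(D) : d_j > t\} \cup \{p_i\}$ lies inside $B(p_i,\rho)$ and has all pairwise distances strictly greater than~$t$.

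With this setup, the idea is to write
\[
\sum_{p_j \in \set(D)} d_j^\alpha \;=\; \int_0^{\rho} \alpha\, t^{\alpha-1}\, N(t)\, dt, \qquad N(t) := \bigl|\{p_j \in \set(D) : d_j > t\}\bigr|,
\]
using the layer-cake identity together with the bound $d_j \leq \rho$ from~(a). A planar packing estimate inside $B(p_i,\rho)$, enabled by~(b), gives an explicit upper bound $N(t) \leq N^*(t)$ of order $\rho^2/t^2$. Substituting into the integral and evaluating should produce a closed-form bound of the form $C(\alpha)\, \rho^\alpha$; tuning $N^*(t)$ sharply should deliver $C(\alpha) = \alpha(2^\alpha-3)/(2^{\alpha-1}-\alpha)$ and hence the lemma, since $F^*_\alpha$ is the supremum of $\sum d_j^\alpha / \rho^\alpha$ over all admissible configurations.

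The hard part is recovering the \emph{exact} constant rather than a coarser bound such as $O(\rho^\alpha/(\alpha-2))$. A blunt packing estimate like $N^*(t) \leq (2\rho/t + 1)^2$ gives the correct order in $\rho$ but a visibly worse leading constant. The denominator $2^{\alpha-1}-\alpha$, which blows up as $\alpha\downarrow 2$ and produces the $U$-shape with minimum near $\alpha^*\approx 4.3$ mentioned just after the lemma statement, suggests that the sharp argument combines two packing regimes: a coarse count of ``large-$d_j$'' points near the boundary of $D$ (where only a handful of well-separated points fit), together with a density-based bound for ``small-$d_j$'' points. I expect the bulk of the work to lie in (i) picking $N^*(t)$ so that it is simultaneously tight in both regimes---possibly piecewise, on a geometrically-scaled partition of $d_j$-values---and (ii) carrying out the resulting integration to extract exactly this constant.
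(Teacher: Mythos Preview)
Your layer-cake plus packing strategy is sound for producing \emph{some} bound of the shape $F^*_\alpha \le C(\alpha)$ with $C(\alpha)<\infty$ whenever $\alpha>2$: the identity $\sum_j d_j^\alpha=\int_0^\rho \alpha t^{\alpha-1}N(t)\,dt$ together with the separation observation~(b) are exactly the right ingredients, and your blunt estimate $N(t)\le (2\rho/t+1)^2-1$ already yields $C(\alpha)=4\alpha\bigl(\tfrac{1}{\alpha-2}+\tfrac{1}{\alpha-1}\bigr)$. The genuine gap is the step you label ``tuning $N^*(t)$ sharply'' to recover the specific constant $\alpha(2^\alpha-3)/(2^{\alpha-1}-\alpha)$. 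That constant does not arise from a packing count at all, and there is no evident choice of $N^*(t)$ that integrates to it; in fact your blunt bound already \emph{beats} the lemma's constant for $\alpha$ above roughly~$4.3$ and loses to it for smaller~$\alpha$, which tells you the two expressions come from unrelated mechanisms rather than from coarser versus finer versions of the same estimate. So as written, the proposal proves the qualitative content of the lemma (finiteness for $\alpha>2$, with the correct blow-up as $\alpha\downarrow 2$) but not the stated inequality.

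The paper's route is entirely different: a potential-function argument. It sets $\Phi(i)=\int_{\Reals^2}\phi(q,i)\,dq$ where, up to a boundary correction on the annulus $1<\dist(q,p_0)\le 3/2$, $\phi(q,i)=c_\alpha\,\dist(q,\nn_i(q))^{\alpha-2}$; it then interprets $\Phi(i)$ as a three-dimensional volume under the lower envelope of the surfaces $z=c_\alpha\,\dist((x,y),p_j)^{\alpha-2}$. Two explicit cylindrical-shell integrations do all the work: inserting $p_i$ carves out a volume of at least $d_i^{\alpha}$ (this is where $c_\alpha$ is chosen, via the identity $\int_0^{1/2}\rho\bigl[(1-\rho)^{\alpha-2}-\rho^{\alpha-2}\bigr]d\rho=(2^{\alpha-1}-\alpha)/\bigl(2^{\alpha-1}\alpha(\alpha-1)\bigr)$, which is the source of the denominator $2^{\alpha-1}-\alpha$), and the initial volume $\Phi(0)$ evaluates to exactly $\alpha(2^\alpha-3)/(2^{\alpha-1}-\alpha)$. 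The argument tracks the full two-dimensional geometry of the nearest-neighbour map rather than just the level-set counts $N(t)$, and that extra information is what produces this particular constant.
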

The formal proof of the lemma is  quite technical 
\ifFullVersion
so we sketch the intuition here before diving into the proof, 
\else
and can be found in the full version. Here we just sketch the intuition, 
\fi
also showing why the condition $\alpha>2$ is needed.
The quantity $F^*_\alpha$ can be thought of in the following way. Consider a disk $D$
of radius $\rho$ centered at $p_i$, and imagine the points in $\set(D)$ arriving one by one.
(The points in $\set(p_i,D)\setminus \set(D)$ are irrelevant.) Whenever a new point~$p_j$
arrives, then $F^*_{\alpha}$ increases by $\dist(p_j,\nn(p_j))^\alpha$, where $\nn(p_j)$ is $p_j$'s
nearest neighbor among the already arrived points from $\set(D)$ including~$p_i$.
Since the more points arrive the distances to the nearest neighbor will decrease---more precisely,
we cannot have many points whose nearest neighbor is at a relatively large distance---the
hope is that the sum of these distance to the power $\alpha$ converges, and this is
indeed what we can prove for $\alpha>2$. For $\alpha=2$ it does not converge, as shown
by the following example.

Let $D$ be a unit disk centered at $p_i$, and consider the inscribed square~$\sigma$ of $D$.
Note that the radius of $\sigma$---the distance from its center to its vertices---is~1.
We insert a set $\set(D)$ of $n-1$ points in rounds, as follows.
In the first round we partition $\sigma$ into four squares of radius~1/2, and we insert
a point in the center of each of them. These four points all have $p_i$
as nearest neighbor, and $F^*_{\alpha}$ increases by $4\cdot(1/2)^\alpha=(1/2)^{\alpha-2}$. 
We recurse in each of the four squares.
Thus in the $k$-th round, we have~$4^{k-1}$ squares of radius~$(1/2)^{k-1}$, 
each of which is partitioned into four squares of radius~$(1/2)^k$, and we place a point inside each such subsquare.
This increases $F^*_\alpha$ by~$4^k\cdot (1/2^{k})^\alpha=(1/2^{2-\alpha})^k$. 
The total cost is $\sum_{k=1}^t (1/2^{2-\alpha})^k$, where $t:=\Theta(\log n)$ 
is the number of rounds. 

Note that  $1/2^{2-\alpha}=1$ for $\alpha=2$, giving $F^*_2=\Omega(\log n)$, while
for $\alpha>2$ the total cost converges.
Also note that the example only gives a lower bound on $F^*_2$, it does not show
that \NN has unbounded competitive ratio for~$\alpha=2$. The reason is that \NN actually
pays less than $F^*_2$, since most points~$p_j$ are already within range of an existing point
upon insertion, and so we do not have to pay~$\dist(p_j,\nn(p_j))^\alpha$.
Indeed, in the next section we prove, using a different argument, that \NN is
$O(1)$-competitive even for $\alpha=2$.

\ifFullVersion
We now present the proof of Lemma~\ref{lem:ub-F_alpha}.
\begin{proofof}{Lemma~\ref{lem:ub-F_alpha}}
Let~$p_\ell$ be a point and let~$D$ be any disk  
centred at $p_\ell$. 
For the sake of simplicity, we rescale~$D$ to be a unit disk 
and relabel points in~$\set(D)$ as~$p_0,\ldots,p_k$ without changing 
the ordering and where~$p_0$ is the center of~$D$. 
We show that~$ \sum_{i=1}^{k}F_{\alpha} (p_i) \leqslant \alpha \frac{2^\alpha -3}{2^{\alpha-1}-\alpha}$. 
To that purpose we create a potential 
function~$\Phi:\{0,\ldots,k \} \to \Reals $, with~$\Phi(i)$ being 
the potential when~$p_i$ is inserted, 
with the following properties: 
\begin{itemize}
\item $\Phi(0)=\alpha \frac{2^\alpha -3}{2^{\alpha-1}-\alpha}$, 
\item $\Phi(i)>0$ for any~$i=0,\ldots,k$, 
\item $\Phi(i-1)-\Phi(i) \geqslant F_\alpha(p_i)$ for any~$i=1,\ldots,k$. 
\end{itemize} 
If such a potential function exists, we then indeed 
have~$F^*_{\alpha} (p_i) \leqslant \alpha \frac{2^\alpha -3}{2^{\alpha-1}-\alpha}$. 

Let~$p_i$ be the last point inserted. 
For any point~$q$ in the plane, let~$\nn_i(q)$ be its closest point 
with among~$p_1,\ldots,p_i$. We define 
the potential~$\phi(q,i)$ at~$q$ at time~$i$ as follows: 
\begin{align*}
\phi(q,i) := \left\lbrace \begin{array}{l}
c_\alpha
		\dist(q,\nn_i(q))^{\alpha-2} \\[4pt]
		\qquad \mbox{if } q \in D \mbox{, that is, } \dist(q,p_0) \leq 1; \\[12pt]
c_\alpha
		( \dist(q,\nn_i(q))^{\alpha-2} 
			- \dist(q,\partial D )^{\alpha -2} )\\[4pt]
		\qquad \mbox{if } 1 < \dist(q,p_0) \leqslant \frac{3}{2}, 
		\mbox{ and } \dist(q,\nn_i(q)) \geqslant \dist(q,\partial D_2); \\[12pt]
c_\alpha
		( d(q,\partial D_2)^{\alpha-2}
			 - \dist(q,\partial D )^{\alpha -2} ) \\[4pt]
		\qquad \mbox{if } 1 < \dist(q,p_0) \leqslant \frac{3}{2},
		 \mbox{ and } \dist(q,\nn_i(q)) < \dist(q,\partial D_2);  \\[12pt]
0  \quad \ \ \mbox{otherwise;}
\end{array} \right.
\end{align*}
where~$D_2$ is the disk of radius 2 centred at~$p_0$ 
and~$c_\alpha = \frac{\alpha(\alpha -1) 2^{\alpha-2}}{\pi (2^{\alpha-1} -\alpha)}$ 
is a constant depending only on~$\alpha$. 
See Figure~\ref{fig:phi} for an illustration of the cases. 
\begin{figure}
\begin{center}
\begin{tikzpicture}
\node at (0,0) (p0) {};
\node at (-0.8,0.4) (pi) {};
\node at (0.2,-0.7) (pj) {};
\node at (0.5,0.8) (nn) {};
\node at (1.6,1.3) (q) {};

\draw[opacity=0, fill opacity=1, fill=mygray-light] (p0) circle (2.25);
\draw (p0) circle (1.5);
\draw (p0) circle (3);

\node at (-1.5,1) {$D$};
\node at (-3.25,1.5) {$D_2$}; 

\draw (q.center) -- (nn.center);
\begin{scope}
    \clip(p0) circle (3);
	\draw (q.center) -- (3.2,2.6);
\end{scope}

\foreach \i/\l in {p0/p_0,nn/\nn_i(q),q/q,pi/p_i,pj/p_j}{
	\draw[fill=white] (\i) circle (0.06);
	\node at ([yshift=-4mm]\i.center) {$\l$};
}
\end{tikzpicture}
\end{center}
\caption{Outside the grey region the function~$\phi$ is always 0. 
When~$q$ is inside~$D$, the function~$\phi(q,i)$ is 
simply~$c_\alpha \dist(q,\nn_i(q))^{\alpha-2}$. Finally, when~$q$ is in the grey area 
but not in~$D$, that is~$1< \dist(q,p_0) \leqslant 1.5$, we have 
that~$\phi(q,i)= c_\alpha \left(\min \{ \dist(q,\nn_i(q)),\dist(q,\partial D_2) \}^{\alpha-2} - \dist(q,\partial D )^{\alpha -2} \right)$ }
\label{fig:phi}
\end{figure}
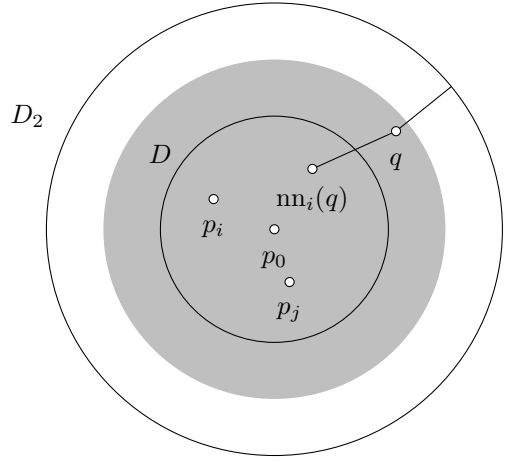

We finally define the potential function at time~$i$ as 
$$\Phi(i):= \iint_{\Reals^2} \phi (q,i) dq.$$
This potential function can be interpreted as a volume in~$\Reals^3$, 
where we assume without loss of generality that the center of~$D$ lies at 
the origin and the points~$p_0,\ldots,p_k$ all lie in the plane~$z=0$. 
The volume then 
consists of the following. Over~$D$ it is the volume between the plane~$z=0$ 
and the lower envelope of a set of ``paraboloids'', one for each 
point~$p_j\in \{ p_0,\ldots, p_i \}$, defined 
by~$\parb_\alpha(p_j)= \{ (x,y,z) \mid z = c_\alpha \dist((x,y),p_j)^{\alpha-2} \}$. 
Outside of~$D$, on the other hand, the 
volume is defined as the volume under the lower envelope of~$\parb_\alpha(p)$ 
for each point~$p\in \{ p_0,\ldots, p_i\} \cup \partial D_2$ 
and above the paraboloids~$\parb_\alpha(p)$ for each point~$p\in \partial D$. 
See Figure~\ref{fig:V_i} for an illustration.
\begin{figure}
\begin{center}
\begin{tikzpicture}
\draw[thick] (-5,0) -- (5,0);
\draw[thick, -latex] (0,0) -- (0,2.5);
\node at (0,2.8) {$z$};

\node at (0.3,-0.3) {$p_0$};
\node at (1.6,-0.3) {$p_1$};
\node at (-0.6,-0.3) {$p_2$};

\draw[thick, dotted] (0,0) --++ (0,-0.85);
\draw[thick, dotted] (2,0) --++ (0,-0.85);
\draw[thick, dotted] (3,0) --++ (0,-0.85);
\draw[thick, dotted] (4,0) --++ (0,-0.85);
\draw[thick, dotted] (-2,0) --++ (0,-0.85);
\draw[thick, dotted] (-3,0) --++ (0,-0.85);
\draw[thick, dotted] (-4,0) --++ (0,-0.85);

\draw[latex-latex] (0,-0.7) -- (2,-0.7);
\draw[latex-latex] (2,-0.7) -- (3,-0.7);
\draw[latex-latex] (3,-0.7) -- (4,-0.7);
\draw[latex-latex] (0,-0.7) -- (-2,-0.7);
\draw[latex-latex] (-2,-0.7) -- (-3,-0.7);
\draw[latex-latex] (-3,-0.7) -- (-4,-0.7);

\node at (1,-1) {$1$}; 
\node at (-1,-1) {$1$}; 
\node at (2.5,-1) {$\frac{1}{2}$}; 
\node at (3.5,-1) {$\frac{1}{2}$}; 
\node at (-2.5,-1) {$\frac{1}{2}$}; 
\node at (-3.5,-1) {$\frac{1}{2}$}; 

\begin{scope}
	\clip(0,0) parabola (2,2) -- (4,2) -- (4,-0.1) -- 
			(-4, -0.1) -- (-4,0) parabola (-2,2) -- cycle;
	\clip(1.6,0) parabola (3.6,2) -- (4,2) -- (4,-0.1) -- 
			(-4, -0.1) -- (-4,0) -- (-4,2) -- (1.6,2) -- cycle;
	\clip(-0.6,0) parabola (1.4,2) -- (4,2) -- (4,-0.1) -- 
			(-4, -0.1) -- (-4,0) -- (-4,2) -- (1.6,2) -- cycle;
	\clip(0,0) parabola (-2,2) -- (-4,2) -- (-4,-0.1) -- 
			(4, -0.1) -- (4,0) parabola (2,2) -- cycle;
	\clip(1.6,0) parabola (-0.4,2) -- (-4,2) -- (-4,-0.1) -- 
			(4, -0.1) -- (4,0) -- (4,2) -- (4,1.6) -- cycle;
	\clip(-0.6,0) parabola (-2.6,2) -- (-4,2) -- (-4,-0.1) -- 
			(4, -0.1) -- (4,0) -- (4,2) -- (4,1.6) -- cycle;
	\clip(2,0) parabola (4,2) -- (-4,2) -- (-4,-0) -- cycle;
	\clip(-2,0) parabola (-4,2) -- (4,2) -- (4,-0) -- cycle;
    \draw[fill=mygray-dark, opacity=0, fill opacity=1] 
   		(4,2) rectangle (-4,0); 
\end{scope}

\draw (0,0) parabola (2,2);
\draw (0,0) parabola (-2,2);
\draw (1.6,0) parabola (3.6,2);
\draw (1.6,0) parabola (-0.4,2);
\draw (-0.6,0) parabola (1.4,2);
\draw (-0.6,0) parabola (-2.6,2);
\draw (2,0) parabola (4,2);
\draw (-2,0) parabola (-4,2);
\draw (4,0) parabola (2,2);
\draw (-4,0) parabola (-2,2);

\foreach \i in {0,1.6,-0.6}{
	\draw[fill=white, thick] (\i,0) circle (0.06);
}
\end{tikzpicture}
\end{center}
\caption{Cross section of the volume~$V_i$ in gray. For clarity, we do not draw 
the paraboloids of points outside the cross section. }
\label{fig:V_i}
\end{figure}
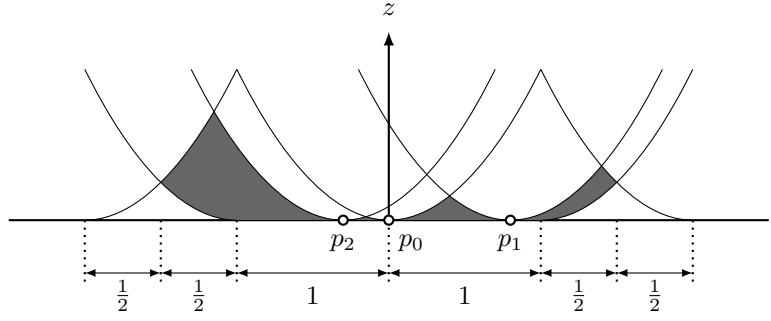

We now need to show that this potential function has the claimed properties. 
It is easy to see that~$\Phi(i)>0$ for each~$i=0,\ldots,k$. Next we show 
that the decrease of potential is at least as large as the cost. 
Let~$p_i$ be the point inserted and let~$p_j$, with~$j<i$, be 
a nearest neighbor of~$p_i$ with~$d^*:=\dist(p_i,p_j)$. Upon insertion of~$p_i$, we add 
a paraboloid defined at a point~$q\in D$ by~$c_\alpha \dist(q,p_i)^{\alpha -2}$. 
The decrease of potential is then the volume~$V_i$ subtracted by this 
surface. Let us consider the 
volume 
\begin{align*}
V_i^*:=\{ q=(x,y,z)\mid &\dist((x,y,0),p_i) \leqslant d^* \\
		&\mbox{ and } c_\alpha \dist(q,p_i)^{\alpha -2} \leqslant z \leqslant 
		c_\alpha (d^*-\dist(q,p_i))^{\alpha -2}  \}. 
\end{align*}
See Figure~\ref{fig:V_i_star} 
for an illustration of the volume~$V_i^*$. 
\begin{figure}
\begin{center}
\begin{tikzpicture}
\draw[thick] (-3,0) -- (3,0);
\node at (0,2.8) {$z$};
\node[scale=0.75] at (0,2.3) {\AxisRotator[rotate=-90]};

\begin{scope}
	\clip(0,0) parabola (2,2) -- (-2,2) -- (-2,0) -- cycle;
	\clip(0,0) parabola (-2,2) -- (2,2) -- (2,0) -- cycle;
	\clip(2,0) parabola (0,2) -- (-2,2) -- (-2,0) -- cycle;
	\clip(-2,0) parabola (0,2) -- (2,2) -- (2,0) -- cycle;
    \draw[fill=mygray-dark, opacity=0, fill opacity=1] 
   		(2,2) rectangle (-2,0); 
\end{scope}

\draw[thick, -latex] (0,0) -- (0,2.5);

\node at (0,-0.3) {$p_i$};
\draw[thick] (1,0.1) -- (1,-0.1);
\draw[thick] (2,0.1) -- (2,-0.1);
\draw[thick] (-1,0.1) -- (-1,-0.1);
\draw[thick] (-2,0.1) -- (-2,-0.1);
\node at (0.5,-0.5) {$\frac{d^*}{2}$}; 
\node at (1.5,-0.5) {$\frac{d^*}{2}$}; 
\node at (-0.5,-0.5) {$\frac{d^*}{2}$}; 
\node at (-1.5,-0.5) {$\frac{d^*}{2}$}; 

\draw (0,0) parabola (2,2);
\draw (0,0) parabola (-2,2);
\draw (2,0) parabola (0,2);
\draw (-2,0) parabola (0,2);

\foreach \i in {0}{
	\draw[fill=white, thick] (\i,0) circle (0.06);
}
\end{tikzpicture}
\begin{tikzpicture}
\begin{scope}
	\clip(0,0) parabola (2,2) -- (-2,2) -- (-2,0) -- cycle;
	\clip(0,0) parabola (-2,2) -- (2,2) -- (2,0) -- cycle;
	\clip(2,0) parabola (0,2) -- (-2,2) -- (-2,0) -- cycle;
	\clip(-2,0) parabola (0,2) -- (2,2) -- (2,0) -- cycle;
    \draw[fill=black, opacity=0, fill opacity=0.2] 
   		(2,2) rectangle (-2,0); 
   	\shade[left color=black!10,right color=black!80] (2,2) rectangle (0,0);
   	\shade[left color=black!80,right color=black!10] (0,2) rectangle (-2,0);
\end{scope}

\begin{scope}[even odd rule]
	\clip(0,0) parabola (2,2) -- (-2,2) -- (-2,0) -- cycle;
	\clip(0,0) parabola (-2,2) -- (2,2) -- (2,0) -- cycle;
	\clip(2,0) parabola (0,2) -- (-2,2) -- (-2,0) -- cycle;
	\clip(-2,0) parabola (0,2) -- (2,2) -- (2,0) -- cycle;
	\clip(0,0.5) ellipse (2cm and 1cm) (0,0.54) ellipse (1cm and 0.2cm);
	\clip(1.1, 0.5) rectangle (-1.1, -0.1);
    \shade[inner color=black!25,outer color=black!70] 
   		(1,1.3) rectangle (-1.2,-0.9); 
\end{scope}

\begin{scope}
	\clip(1.1, 0.5) rectangle (-1.1, 0);
	\draw[very thin] (0,0.54) ellipse (1cm and 0.2cm); 
\end{scope}
\begin{scope}
	\clip(1.1, 0.5) rectangle (-1.1, 1);
	\draw[very thin, color=black!50] (0,0.46) ellipse (1cm and 0.2cm); 
\end{scope}

\draw[thick, -latex] (-1.5,0) -- (1.5,0);
\node at (1.8,0) {$y$};

\draw[thick, -latex] (0,2) -- (0,2.5);
\draw[color=black!50] (0,0) -- (0,2);
\draw[thick] (0,-0.1) -- (0,0);
\node at (0,2.8) {$z$};

\draw[thick, -latex] (0,0) -- (-0.8,-0.5);
\draw[color=black!50] (0.4,0.25) -- (0,0);
\node at (-1,-0.625) {$x$};

\node at (0,-0.3) {$p_i$};

\node at (1.5,-0.5) {\textcolor{white}{$\frac{d^*}{2}$}}; 
\end{tikzpicture}
\end{center}
\caption{The volume we use as a lower bound on the decrease of potential 
upon insertion of~$p_i$. On the left, we have a cross section of the volume, 
where~$d^*=\dist(p_i,p_j)$ and~$p_j$ is the nearest point to~$p_i$ with~$j<i$. 
On the right, we have the volume in 3 dimensions. All the paraboloids 
defined by~$\parb_\alpha(p)$ for some~$p$.}
\label{fig:V_i_star}
\end{figure}
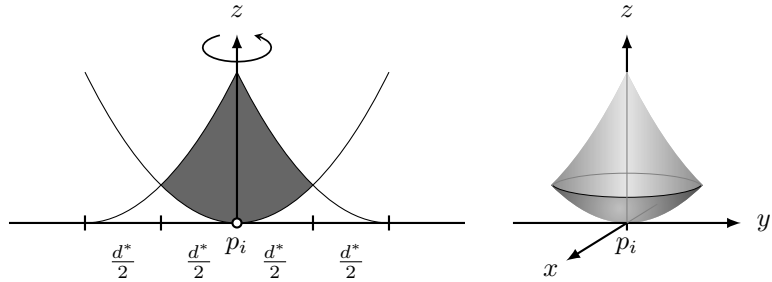
Next we argue that~$V_i^* \subseteq V_i$ by showing that the upper boundary of~$V_i^*$ is 
under the upper boundary of~$V_i$ and the lower boundary of~$V_i^*$ is 
above the lower boundary of~$V_i$. 
Since the upper boundary of~$V_i^*$ is defined by paraboloids at distance~$d^*$, 
and since~$d^*$ is the distance to the closest point~$p_j$, the volume~$V_i^*$ 
is under the upper boundary of~$V_i$. On the other hand, since the lower 
boundary of~$V_i^*$ is defined again by the same paraboloids, even 
if~$p_i$ is close to the boundary of~$D$, the volume~$V_i^*$ is above 
the lower boundary of~$V_i$. Therefore~$V_i^* \subseteq V_i$. 

We now compute the volume of~$V_i^*$. We do this by fixing a radius~$\rho$, 
then computing the area of the largest cylinder of radius~$\rho$ centered 
around the vertical axis passing through~$p_i$ and inscribed in~$V_i^*$ 
and integrating that value from~$0$ until~$d^*/2$. For a 
certain~$0 \leqslant \rho \leqslant d^*/2$, the area of the cylinder is~$2\pi \rho h(\rho)$ 
where~$h(\rho)$ is the height of the tallest cylinder 
of radius~$\rho$ inscribed in~$V_i^*$. 
It remains to compute~$h(\rho)$. This is given by the difference of height 
between the two paraboloids (one on~$p_i$ and one on a point at distance~$d^*$ 
of~$p_i$), i.e.,~$h(\rho)= c_\alpha ((d^*-\rho)^{\alpha-2} - \rho^{\alpha-2})$. 
Thus, 
\begin{align*}
\vol(V_i^*) &= \int_0^{d^*/2} 2\pi \rho c_\alpha ((d^*-\rho)^{\alpha-2} - \rho^{\alpha-2}) d\rho \\
&= 2\pi c_\alpha \int_0^{d^*/2} \rho(d^*-\rho)^{\alpha-2} - \rho^{\alpha-1} d\rho.
\end{align*}
We can integrate~$\rho(d^*-\rho)^{\alpha-2}$ by parts: 
\begin{align*}
\int_0^{d^*/2} \rho(d^*-\rho)^{\alpha-2} d\rho 
	&= \rho  \frac{-(d^*-\rho)^{\alpha-1}}{\alpha -1} \Big|_0^{d^*/2} 
		- \int_0^{d^*/2} 1\frac{-(d^*-\rho)^{\alpha-1}}{\alpha-1} d\rho \\
	&= - \rho  \frac{(d^*-\rho)^{\alpha-1}}{\alpha -1}  
		-  \frac{(d^*-\rho)^{\alpha}}
			{\alpha(\alpha-1)} \Big|_0^{d^*/2}.
\end{align*}
It gives us the following: 
\begin{align*}
\vol(V_i^*) &= 2\pi c_\alpha 
		\left[ - \rho \frac{(d^*-\rho)^{\alpha-1}}{\alpha -1}  
		-  \frac{(d^*-\rho)^{\alpha}}{\alpha(\alpha-1)}  
		- \frac{\rho^\alpha}{\alpha} \right]\Big|_0^{d^*/2} \\
	&= 2\pi c_\alpha 
		\left[ - \frac{d^*}{2} \frac{(d^*/2)^{\alpha-1}}{\alpha -1}  
		-  \frac{(d^*/2)^{\alpha}}{\alpha(\alpha-1)}  
		- \frac{(d^*/2)^\alpha}{\alpha} 
		+ \frac{d^{*\alpha}}{\alpha(\alpha -1)} \right] \\
	&= 2\pi c_\alpha 
		\left[ -  \frac{d^{*\alpha}}{2^{\alpha}(\alpha -1)}  
		-  \frac{d^{*\alpha}}{2^{\alpha}\alpha(\alpha-1)}  
		- \frac{d^{*\alpha}}{2^{\alpha}\alpha} 
		+ \frac{d^{*\alpha}}{\alpha(\alpha -1)} \right] \\
	&= \frac{\pi}{2^{\alpha-1}\alpha(\alpha-1)} c_\alpha d^{*\alpha}  
		\left[ - \alpha -1 -(\alpha-1) +2^\alpha \right] \\
	&= \frac{\pi(2^{\alpha-1} - \alpha)}{2^{\alpha-2}\alpha(\alpha-1)} 
		c_\alpha d^{*\alpha}. 
\end{align*}
Recall that~$c_\alpha=\frac{\alpha(\alpha -1) 2^{\alpha-2}}{\pi (2^{\alpha-1} -\alpha)}$. 
We thus get 
\begin{align*}
\vol(V_i^*) &=  d^{*\alpha}, 
\end{align*}
which is exactly the cost of inserting~$p_i$, therefore the decrease of potential 
is indeed at most the cost. 

It remains to show that~$\Phi(0)=\alpha \frac{2^\alpha -3}{2^{\alpha-1}-\alpha}$. 
To that purpose, let~$V_0$ be a volume representing~$\Phi(0)$, defined as 
\begin{align*}
V_0 := &\{ (x,y,z) \mid (x,y) \in D \mbox{ and } 
			0\leqslant z \leqslant c_\alpha \dist((x,y),p_0)^{\alpha -2} \} \\ 
	\cup &\{ (x,y,z) \mid 1< \dist((x,y),p_0) \leqslant \frac{3}{2} \\
	&\mbox{ and } 
			c_\alpha (2-\dist((x,y),p_0))^{\alpha -2} \leqslant 
			z \leqslant c_\alpha (\dist((x,y),p_0)-1)^{\alpha-2} \}
\end{align*}
as depicted in Figure~\ref{fig:V_0}. 
\begin{figure}
\begin{center}
\begin{tikzpicture}
\draw[thick] (-5,0) -- (5,0);
\node at (0,2.8) {$z$};
\node[scale=0.75] at (0,2.3) {\AxisRotator[rotate=-90]};

\begin{scope}
	\clip(0,0) parabola (2,2) -- (4,2) -- (4,-0.1) -- 
			(-4, -0.1) -- (-4,0) parabola (-2,2) -- cycle;
	\clip(0,0) parabola (-2,2) -- (-4,2) -- (-4,-0.1) -- 
			(4, -0.1) -- (4,0) parabola (2,2) -- cycle;
	\clip(2,0) parabola (4,2) -- (-4,2) -- (-4,-0) -- cycle;
	\clip(-2,0) parabola (-4,2) -- (4,2) -- (4,-0) -- cycle;
    \draw[fill=mygray-dark, opacity=0, fill opacity=1] 
   		(4,2) rectangle (-4,0); 
\end{scope}

\draw[thick, -latex] (0,0) -- (0,2.5);

\node at (0,-0.3) {$p_0$};
\draw[thick] (2,0.1) -- (2,-0.1);
\draw[thick] (3,0.1) -- (3,-0.1);
\draw[thick] (4,0.1) -- (4,-0.1);
\draw[thick] (-2,0.1) -- (-2,-0.1);
\draw[thick] (-3,0.1) -- (-3,-0.1);
\draw[thick] (-4,0.1) -- (-4,-0.1);
\node at (1,-0.5) {$1$}; 
\node at (-1,-0.5) {$1$}; 
\node at (2.5,-0.5) {$\frac{1}{2}$}; 
\node at (3.5,-0.5) {$\frac{1}{2}$}; 
\node at (-2.5,-0.5) {$\frac{1}{2}$}; 
\node at (-3.5,-0.5) {$\frac{1}{2}$}; 

\draw (0,0) parabola (2,2);
\draw (0,0) parabola (-2,2);
\draw (2,0) parabola (4,2);
\draw (-2,0) parabola (-4,2);
\draw (4,0) parabola (2,2);
\draw (-4,0) parabola (-2,2);

\foreach \i in {0}{
	\draw[fill=white, thick] (\i,0) circle (0.06);
}
\end{tikzpicture}
\end{center}
\caption{The volume~$V_0$. }
\label{fig:V_0}
\end{figure}
We use the same technique as above to 
compute $$\vol(V_0)=\int_0^{3/2} 2\pi \rho \cdot h(\rho)d\rho. $$
We have~$h(\rho)= c_\alpha \rho^{\alpha-2}$ when~$\rho\leqslant 1$ 
and~$h(\rho)= c_\alpha [(2-\rho)^{\alpha-2} - (\rho-1)^{\alpha-2}]$ 
when~$1 < \rho \leqslant 3/2$. We therefore get 
\begin{align*}
\vol(V_0) &= 2\pi c_\alpha \left( 
			\int_0^{1} \rho^{\alpha-1} d\rho 
			+ \int_1^{3/2} \rho(2-\rho)^{\alpha-2} - \rho(\rho-1)^{\alpha-2} d\rho 
			\right).
\end{align*}
We again use integration by parts. 
\begin{align*}
\int_1^{3/2} \rho(\rho-1)^{\alpha-2} d\rho 
		&= \rho \frac{(\rho-1)^{\alpha-1}}{\alpha-1} \Big|_1^{3/2}
			- \int_1^{3/2} 1\frac{(\rho-1)^{\alpha-1}}{\alpha-1} d\rho \\
		&= \left( \rho \frac{(\rho-1)^{\alpha-1}}{\alpha-1} 
				- \frac{(\rho-1)^{\alpha}}{\alpha(\alpha-1)}
			\right)\Big|_1^{3/2} \\
		&= \frac{3}{2} \frac{(1/2)^{\alpha-1}}{\alpha-1} 
				- \frac{(1/2)^{\alpha}}{\alpha(\alpha-1)} \\
		&= \frac{3}{2^\alpha (\alpha-1)} 
				- \frac{1}{2^{\alpha}\alpha(\alpha-1)} \\
		&= \frac{3\alpha - 1}{2^\alpha \alpha (\alpha-1)} 
\end{align*}
and 
\begin{align*}
\int_1^{3/2} \rho(2-\rho)^{\alpha-2} d\rho 
		&= \rho \frac{-(2-\rho)^{\alpha-1}}{\alpha-1} \Big|_1^{3/2}
			+ \int_1^{3/2} 1\frac{(2-\rho)^{\alpha-1}}{\alpha-1} d\rho \\
		&= \left( -\rho \frac{(2-\rho)^{\alpha-1}}{\alpha-1} 
				- \frac{(2-\rho)^{\alpha}}{\alpha(\alpha-1)}
			\right)\Big|_1^{3/2} \\
		&= - \frac{3}{2} \frac{(1/2)^{\alpha-1}}{\alpha-1} 
				- \frac{(1/2)^{\alpha}}{\alpha(\alpha-1)} 
				+ \frac{1}{\alpha-1} 
				+ \frac{1}{\alpha(\alpha-1)} \\
		&= - \frac{3}{2^\alpha (\alpha-1)} 
				- \frac{1}{2^{\alpha}\alpha(\alpha-1)}
				+ \frac{1}{\alpha-1} + \frac{1}{\alpha(\alpha-1)} \\
		&= \frac{-3\alpha -1 + 2^\alpha \alpha + 2^ \alpha }
			{2^\alpha \alpha(\alpha-1)}.
\end{align*}
This together with~$\int_0^{1} \rho^{\alpha-1} d\rho = 1/\alpha$ 
gives us the following: 
\begin{align*}
\vol(V_0) &= 2\pi c_\alpha \left( 
			\frac{1}{\alpha}  
			+ \frac{-3\alpha -1 + 2^\alpha \alpha + 2^ \alpha }
				{2^\alpha \alpha(\alpha-1)}
			-  \frac{3\alpha - 1}{2^\alpha \alpha (\alpha-1)} 
			\right) \\
		&= \frac{\pi c_\alpha}{2^{\alpha-1} \alpha (\alpha-1)} 
			\left( 2^\alpha (\alpha-1) 
				-3\alpha - 1 +  2^\alpha (\alpha + 1) 
				- 3\alpha + 1 
			\right) \\
		&= \frac{\pi c_\alpha}{2^{\alpha-1} \alpha (\alpha-1)}
			\left( 2^{\alpha} (\alpha-1+\alpha+1)  - 6 \alpha  
			\right)\\
		&= \frac{\pi ( 2^{\alpha} \alpha - 3\alpha ) c_\alpha}
			{2^{\alpha-2} \alpha (\alpha-1)}. 			
\end{align*}
Again, with~$c_\alpha=\frac{\alpha(\alpha -1) 2^{\alpha-2}}{\pi (2^{\alpha-1} -\alpha)}$, 
we obtain 
\begin{align*}
\vol(V_0) &= \frac{\pi \alpha ( 2^{\alpha} - 3 )}
			{2^{\alpha-2} \alpha (\alpha-1)}
			\cdot
			\dfrac{\alpha(\alpha -1) 2^{\alpha-2}}{\pi (2^{\alpha-1} -\alpha)} \\
		&= 	\alpha \frac{ 2^{\alpha} - 3 }
			{ 2^{\alpha-1} -\alpha},
\end{align*}
concluding the proof.
\end{proofof}
\fi
\medskip

Using Lemma~\ref{lem:ub-F_alpha} we can prove a bound on the competitive ratio of \NN and \CI.
\begin{theorem}
Let $F^*_\alpha := \alpha \frac{2^\alpha -3}{2^{\alpha-1}-\alpha}$. For any~$\alpha>2$, 
the competitive ratio of \NN and \CI in $\Reals^2$ is at
most~$\min \{ F^*_\beta \mid 2<\beta\leqslant \alpha\}$.
Hence, for $\alpha\leq \alpha^*$, where $\alpha^* = \argmin  F^*_\alpha \approx 4.3$,
the competitive ratio is at most $\alpha \frac{2^\alpha -3}{2^{\alpha-1}-\alpha}$, 
and for $\alpha>\alpha^*$ it is at most~12.94.
\end{theorem}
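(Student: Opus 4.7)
The plan is a charging argument. Let \optalg denote an optimal offline incremental algorithm, write $\rho_i$ for the final range that \optalg assigns to $p_i$, and set $D_i := B(p_i, \rho_i)$. At the moment $p_j$ arrives (for $j \geq 1$), \optalg's broadcast tree on $\objects_j$ contains a parent $\pi(p_j) = p_i$ with $i < j$ and $\dist(p_i, p_j) \leq \rho_i$. I will charge $p_j$ to the disk $D_i$, that is, define $\set(D_i) := \{p_j : \pi(p_j) = p_i\}$. These sets partition $\{p_1, \ldots, p_{n-1}\}$, and each satisfies $\set(D_i) \subseteq \set(p_i, D_i) \setminus \{p_i\}$---exactly the setting to which Lemma~\ref{lem:NN-CI<F_alpha} applies.

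The key quantitative observation, and the reason the bound can be sharpened to $\min_\beta F^*_\beta$ rather than merely $F^*_\alpha$, is that for every $p_j \in \set(D_i)$ the nearest-neighbour distance appearing in $F_\alpha(p_j;\, p_i, D_i)$ is realised by some $p_k \in \set(p_i, D_i)$ with $k < j$, and $p_i$ itself is such a candidate. Hence $d_j := \dist(p_j, p_k) \leq \dist(p_j, p_i) \leq \rho_i$. Therefore, for any $\beta$ with $2 < \beta \leq \alpha$,
\[
F_\alpha(p_j) \;=\; d_j^\alpha \;=\; d_j^{\alpha-\beta} \cdot d_j^\beta \;\leq\; \rho_i^{\alpha-\beta} \cdot F_\beta(p_j).
\]
Summing over $p_j \in \set(D_i)$ and invoking the definition of $F^*_\beta$ together with Lemma~\ref{lem:ub-F_alpha} yields $\sum_{p_j \in \set(D_i)} F_\alpha(p_j) \leq \rho_i^{\alpha-\beta}\cdot F^*_\beta \cdot \rho_i^\beta = F^*_\beta \cdot \rho_i^\alpha$.

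Applying Lemma~\ref{lem:NN-CI<F_alpha} to each $D_i$ with the subset $\set(D_i)$ and summing over $i$ then gives
\[
\cost_\alpha(\NN(\objects)) \;=\; \sum_i \sum_{p_j \in \set(D_i)} \cost_\alpha(\NN, p_j) \;\leq\; F^*_\beta \sum_i \rho_i^\alpha \;=\; F^*_\beta \cdot \cost_\alpha(\optalg(\objects)),
\]
and the identical argument works for \CI. Since the bound holds for every $\beta \in (2, \alpha]$, the competitive ratio is at most $\min\{F^*_\beta \mid 2 < \beta \leq \alpha\}$. The concrete numerical statements then reduce to a one-variable minimisation of $F^*_\beta = \beta(2^\beta - 3)/(2^{\beta-1} - \beta)$, which decreases on $(2, \alpha^*]$ and increases on $[\alpha^*, \infty)$ with minimum value $\approx 12.94$ at $\alpha^* \approx 4.3$.

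The main technical subtlety is arranging the charging so that each $p_j$ lands in a set $\set(D_i)$ whose centre $p_i$ was inserted strictly earlier, which is what lets us invoke Lemma~\ref{lem:NN-CI<F_alpha}; this holds automatically because \optalg must maintain a valid broadcast tree at every time step. The sharper inequality $d_j \leq \rho_i$ (as opposed to the weaker $d_j \leq 2\rho_i$ one would obtain merely from both points lying in $D_i$) is what avoids an extraneous factor of $2^{\alpha-\beta}$ in the calculation above, and is therefore what buys us the $\min_\beta$ bound rather than something substantially weaker.
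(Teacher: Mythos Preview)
Your proposal is correct and follows essentially the same charging argument as the paper: charge each $p_j$ to an optimal disk $D_i$ with $i<j$ and $p_j\in D_i$, use that $d_j\leq\rho_i$ (since $p_i$ itself is a candidate predecessor) to pass from exponent~$\alpha$ to~$\beta$, and then invoke the definition of $F^*_\beta$ together with Lemma~\ref{lem:ub-F_alpha}. The only cosmetic difference is that you route the argument through $F_\alpha(p_j)$ and the pointwise inequality $F_\alpha(p_j)\leq\rho_i^{\alpha-\beta}F_\beta(p_j)$, whereas the paper first bounds $\cost_\alpha(\NN,p_j)$ by the \emph{global} nearest-neighbour distance $\dist(p_j,\nn(p_j))^\alpha$ and then scales; both versions rely on the same bound $d_j\leq\rho_i$ and yield identical conclusions.
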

\begin{proof}
Consider a sequence~$p_0,p_1,\ldots,p_{n-1}$ of points in the plane. 
Let~$D_j$ be the disk centered at~$p_j$ in an optimal solution, 
after the last point~$p_{n-1}$ has been handled, and let~$\rho_j$ be its radius. 
Thus the cost of the optimal solution is
$\opt := \sum_{j=0}^{n-1} \rho_j^{\alpha}$. 
To bound the cost of \NN on the same sequence, we charge the 
cost of inserting~$p_i$, with $0<i\leq n-1$, to a disk~$D_j$ 
such that $j<i$ and $p_i\in D_j$. Such a disk $D_j$ exists, since 
after $p_i$'s insertion, $p_i$ is contained in a disk of an existing point $p_j$ and so $p_i$ will also be contained in~$D_j$,
the final disk of~$p_j$.
(If there are more such points, we take an arbitrary one.) 
Let~$\set(D_j)$ be the set of points that charge disk~$D_j$. 
Note that~$\{p_1,\ldots,p_{n-1}\} = \bigcup_{j=0}^{n-2} \set(D_j)$. 
Hence, using Lemmas~\ref{lem:NN-CI<F_alpha} and~\ref{lem:ub-F_alpha}, 
for any~$2 < \beta \leq \alpha$, for \NN (and similarly for \CI) we get:
\[
\begin{array}{llll}
\cost_\alpha(\NN) 
	&    = & \sum_{i=0}^{n-2}\sum_{p_j\in \set(D_i)} \cost_{\alpha}(\NN,p_j) & \\[2mm]
	&    = & \sum_{i=0}^{n-2} \rho_i^\alpha  \sum_{p_j\in \set(D_i)} 	\frac{\cost_{\alpha}(\NN,p_j)}{\rho_i^\alpha} & \\[2mm]
	& \leq & \sum_{i=0}^{n-2} \rho_i^\alpha  \sum_{p_j\in \set(D_i)} 	\frac{\dist(p_j,\nn(p_j))^\alpha}{\rho_i^\alpha} & \\[2mm]
	& \leq & \sum_{i=0}^{n-2} \rho_i^\alpha  \sum_{p_j\in \set(D_i)}    \frac{\dist(p_j,\nn(p_j))^\beta}{\rho_i^\beta} 
	        	&  \mbox{because } \dist(p_j,\nn(p_j)) \leq\rho \\[2mm]
	&   =  & \sum_{i=0}^{n-2} \rho_i^\alpha \sum_{p_j\in \set(D_i)} \frac{F_{\beta}(p_j)}{\rho_i^\beta} 
		& \mbox{by Lemma~\ref{lem:NN-CI<F_alpha} } \\[2mm]
	& \leq & \sum_{i=0}^{n-2} \rho_i^\alpha  F_{\beta}^* & \\[2mm]
	& \leq & \beta \frac{2^\beta -3}{2^{\beta-1}-\beta} \sum_{i=0}^{n-2} \rho_i^\alpha  
		& \mbox{by Lemma~\ref{lem:ub-F_alpha}}\\[2mm]
	& = & \beta \frac{2^\beta -3}{2^{\beta-1}-\beta} \opt{}. & 
\end{array}
\]
\end{proof}
The next theorem gives a lower bound on the competitive ratio of~\NN. 
\begin{theorem} \label{thm:lower-bound-NN-R2}
For any~$\alpha >1$, \NN has a competitive ratio of at 
least~$6 (1 + (\frac{\sqrt{6}-\sqrt{2}}{2})^\alpha) \approx 6 (1+ 0.52^\alpha)$ 
in the plane. In particular, for~$\alpha=2$, we get a lower bound of~$7.6$ on the competitive ratio.
\end{theorem}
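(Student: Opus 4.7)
The plan is to exhibit an adversarial family of instances in $\Reals^2$, parameterized by a scale $R$, for which NN's cost divided by OPT's cost tends to $6(1+c^\alpha)$ with $c=(\sqrt{6}-\sqrt{2})/2=2\sin 15^\circ$ as $R\to\infty$; scaling $R$ up absorbs the additive constant in the competitive-ratio definition.

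The instance unfolds in two phases. In Phase~1 the adversary places $p_0$ at the origin and inserts the six vertices $h_1,\ldots,h_6$ of a regular hexagon of circumradius $R$ around $p_0$, in sequential order around the hexagon. Each $h_i$ with $i\geq 2$ is equidistant (at distance $R$) from both $p_0$ and the previously-inserted $h_{i-1}$, so adversarial tie-breaking lets NN extend the range of $h_{i-1}$ rather than $p_0$'s range; together with the first insertion, this contributes $6R^\alpha$ to NN's cost, while OPT simply uses $p_0$ with range $R$, paying $R^\alpha$. In Phase~2 the adversary inserts six more points, positioned in the narrow gaps of NN's coverage after Phase~1 and with nearest already-inserted point at distance exactly $cR$. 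Each such insertion forces a further NN range extension of cost $(cR)^\alpha$, adding $6(cR)^\alpha$ to the NN total. The constant $c = 2 \sin 15^\circ$ arises naturally as the chord length of a $30^\circ$ arc on a circle of radius $R$, i.e.\ the edge length of a regular $12$-gon inscribed in that circle; this is the scale describing the gaps left by the hexagonal packing. The Phase~2 points are chosen so that OPT can still cover them cheaply---for instance by a single disk at $p_0$ of radius only slightly larger than $R$, or by combining the range-$R$ disk at $p_0$ with small disks at a few well-chosen hexagon vertices---keeping OPT's total cost at $R^\alpha + o(R^\alpha)$.

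The competitive ratio then becomes $6R^\alpha(1+c^\alpha)/(R^\alpha(1+o(1)))$, which tends to $6(1+c^\alpha)$ as $R\to\infty$, yielding the claimed lower bound. The main obstacle is the detailed geometric construction of the Phase~2 points: they must simultaneously satisfy three constraints---having nearest-neighbour distance exactly $cR$, lying outside every NN disk created in Phase~1, and admitting a cheap OPT coverage. The value $c=2\sin 15^\circ$ emerges as the tightest constant for which all three constraints can be balanced for every one of the six extras in the $6$-fold symmetric setting, reflecting the interplay between the hexagon's $60^\circ$ structure and the finer $30^\circ$ (dodecagon) structure that governs how NN's disks fail to tile the plane around $p_0$.
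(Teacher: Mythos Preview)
Your Phase~1 does not work as described. When $h_1$ arrives, \NN sets $r(p_0)=R$. But then every subsequent hexagon vertex $h_2,\ldots,h_6$ lies at distance exactly~$R$ from $p_0$ and is therefore already covered (the coverage condition is $\dist\leq r$, which includes equality), so \NN does nothing for any of them. The ``adversarial tie-breaking'' you invoke concerns only the choice of nearest neighbour when several already-present points are equidistant from the new point; it does not let the adversary force \NN to extend a range when the new point is already within some existing disk. Hence \NN's Phase-1 cost is $R^\alpha$, not $6R^\alpha$. Worse, since $p_0$ now has range~$R$, any Phase-2 point you place on the radius-$R$ circle is also covered for free, so the construction as stated yields ratio~$1$, not $6(1+c^\alpha)$.

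The paper's construction avoids this with a seeding trick you are missing. Before inserting the six ``outer'' points, it first inserts six seed points $p_1,\ldots,p_6$ at distance~$\varepsilon$ from $p_0$ (one at each of the six hexagon angles). This forces $r(p_0)=\varepsilon$, which is tiny. Only then are the six unit-distance points $p_7,\ldots,p_{12}$ inserted; each has a distinct seed $p_k$ as its nearest neighbour (at distance $1-\varepsilon$), and because $p_0$'s range is only~$\varepsilon$, none of them is yet covered. Thus \NN is forced to create six separate disks of radius $1-\varepsilon$, paying $6(1-\varepsilon)^\alpha$. The final six points are then placed on the unit circle at the dodecagon angles, each at distance $\approx 2\sin(\pi/12)$ from its nearest neighbour and outside all existing disks, forcing six more extensions of cost $\approx (2\sin(\pi/12))^\alpha$. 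Meanwhile \OPT covers everything with a single unit disk at $p_0$, so $\OPT=1$. Your Phase-2 intuition and the identification of the constant $c=2\sin(\pi/12)=(\sqrt{6}-\sqrt{2})/2$ are correct; what is missing is precisely this $\varepsilon$-seeding step that keeps $p_0$'s range small and thereby forces \NN to pay for six large disks instead of one.
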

\begin{proof}
Let $p_0$ be the source placed at the origin. 
The following construction is depicted in 
Figure~\ref{fig:lower_bound_NN}. We  
place~$p_1,\ldots,p_{18}$ in a disk of radius~$1$ around~$p_0$ 
as explained next, such that a possible solution is to place that single 
disk and pay~$1$s. For simplicity, in the rest of the proof we use 
polar coordinates. Let $\varepsilon>0$ be a positive number. Let 
then~$p_1=(\varepsilon, 0)$, $p_2(\varepsilon, \pi/3)$, 
..., and~$p_6=(\varepsilon, 5\pi/3)$ be the next six points. 
\NN places a disk of radius~$\varepsilon$ on~$p_0$. 
Let further~$p_7=(1,0)$, $p_8=(1,\pi/3)$, ..., and~$p_{12}=(1,5\pi/3)$ 
be the next six points. Here \NN places six disks of radius~$1-\varepsilon$
centered around~$p_1,\ldots,p_6$, paying~$6(1-\varepsilon)^\alpha$. 
Finally, let~$p_{13}=(1,\pi/6-\varepsilon)$, 
$p_{14}=(1,3\pi/6-\varepsilon)$,..., 
and~$p_{18}=(1,11\pi/6-\varepsilon)$ be the last six points. 
\NN is now forced to place 6 disks of 
radius almost equal to the side of a 12-gon of radius 1, 
that is~$2\sin (\pi/12) -\delta$ 
for some~$\delta>0$ that tends to~$0$ as~$\varepsilon$ tends to~$0$.  

\begin{figure}
\begin{center}
\begin{tikzpicture}
\node at (0,0) (p0) {};

\draw[fill=mygray-light, opacity=0, fill opacity=1] (p0) circle (2);
\draw[fill=mygray-medium] (p0) circle (0.5);

\node at (0,-0.25) {$p_0$};

\foreach \i in {1,...,6}{
	\node at (-60+60*\i:0.5) (p\i) {};
	\draw (p\i) circle (1.5);
	\node at (-60+60*\i:0.8) {$p_\i$};
}

\foreach \i in {7,8,...,12}{
	\node at (-60+60*\i:2) (p\i) {};
	\draw (p\i) circle (0.86);
	\node at (-60+60*\i:2.3) {$p_{\i}$};
}

\foreach \i in {13,14,...,18}{
	\node at (-35+60*\i:2) (p\i) {};
	\node at (-30+60*\i:2.4) {$p_{\i}$};
}

\foreach \i in {0,1,...,18}{
	\draw[thick, fill=white] (p\i) circle (0.06); 
}
\end{tikzpicture}
\caption{Lower bound on the competitive ratio of \NN{}. The light gray disk (of radius $1$) represents the optimal 
solution on the boundary of which 
points~$p_7,\ldots,p_{18}$ are placed. 
Points~$p_1,\ldots,p_6$ are placed on the 
boundary of a disk of radius~$\varepsilon$ (in dark gray). 
The algorithm \NN 
is forced to place one first disk of radius~$\varepsilon$ around~$p_0$, 
then six disks of radius~$1-\varepsilon$ around~$p_1,\ldots,p_6$. 
And finally six disks of radius roughly 0.5 around~$p_7,\ldots,p_{12}$. }
\label{fig:lower_bound_NN}
\end{center}
\end{figure}
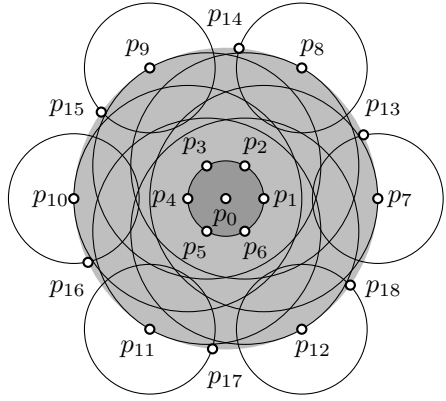

Thus, we have that for any~$\varepsilon>0$, there is 
an instance on which a solution of cost 1 exists, whereas \NN 
is forced to pay~$\varepsilon^\alpha + 6 (1-\varepsilon)^\alpha 
	+ 6 (2\sin (\pi/12) -\delta)^\alpha$, 
where~$\delta > 0$ tends to~$0$ as~$\varepsilon$ tends to~$0$. 
We can therefore conclude that \NN has to pay at least~$6 (1 + (2 \sin(\pi/12))^\alpha)  
	= 6 (1+ (2 \frac{\sqrt{6}-\sqrt{2}}{4})^\alpha)  
	= 6 (1 + (\frac{\sqrt{6}-\sqrt{2}}{2})^\alpha)   
	\approx 6 (1+ 0.52^\alpha)  $, 
whereas~$\opt \leqslant 1$.  
	
We can then scale this construction and thus, there 
is no constant~$a$ such that~$\cost(\NN) \leq c \cdot \OPT + a$ 
for $c < 6 (1 + (\frac{\sqrt{6}-\sqrt{2}}{2})^\alpha)$.
\end{proof}

\subsection{Bounds on the competitive ratio of \NN and 2-\NN when~$\alpha=2$}
Above we proved upper bounds for \NN and \CI for~$\alpha >2$, and we gave a lower
bound for \NN for any $\alpha>1$. We now study \NN and 2-\NN 
for the case~$\alpha=2$. Unfortunately, the arguments below do not apply to \CI. 

\mypara{An upper bound on the competitive ratio of~2-\NN for~$\alpha =2$.}
Let~$\objects := p_0,p_1,\ldots,p_{n-1}$ be the input instance.
Recall that~$\nn(p_i)$ is the closest point to~$p_i$ among~$p_0,\ldots,p_{i-1}$. 
Upon insertion of a point~$p_i$, if~$p_i$ is not covered by the current set of 
balls~$B(p_{i'},r_{j-1}(p_{i'}))$ with $i'<i$, then 
2-\NN increases the range of $\nn(p_i)$ to~$2\cdot\dist(p_i,\nn(p_i))$, 
and otherwise it does nothing. 
Suppose that upon the insertion of some point~$p_i$, we increase 
the range of~$\nn(p_i)$. 
We now define~$D^*_i$ as the disk centered at~$p_i$ (\emph{not} at $\nn(p_i)$) and of radius~$d_i/2$, 
where~$d_i:=\dist(p_i,\nn(p_i))$.
We call~$D^*_i$ the \emph{charging disk} of~$p_i$. Note that the 
charging disk is a tool in the proof, it is not a disk used by the algorithm. 
If~2-\NN did nothing upon insertion of~$p_i$
because~$p_i$ was already covered by a disk, we define~$D_i^*:=\emptyset$. 
\begin{lemma}\label{D_disjoint_3-NN}
For every pair of charging disks~$D^*_i$ and~$D^*_j$ with~$j\neq i$, 
we have $D^*_i \cap D^*_j = \emptyset$. 
\end{lemma}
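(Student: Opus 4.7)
The plan is to show that for $i \neq j$ with both charging disks non-empty, the centers $p_i, p_j$ satisfy $\dist(p_i, p_j) > d_i/2 + d_j/2$, which is exactly the statement that closed disks of radii $d_i/2$ and $d_j/2$ centered at these points are disjoint. Assume without loss of generality $i < j$; if either $D_i^* = \emptyset$ or $D_j^* = \emptyset$ the claim is immediate.

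The key ingredient is a lower bound $\dist(p_i, p_j) > d_i$. Here I would exploit the definition of 2-\NN together with the fact that ranges never decrease. Since $D_i^* \neq \emptyset$, algorithm 2-\NN did increase the range of $\nn(p_i)$ to $2 d_i$ upon insertion of $p_i$, so by monotonicity $r_{j-1}(\nn(p_i)) \geq 2 d_i$. Since $D_j^* \neq \emptyset$, the point $p_j$ is not covered by any existing ball at the moment of its arrival; in particular $\dist(p_j, \nn(p_i)) > r_{j-1}(\nn(p_i)) \geq 2 d_i$. The triangle inequality then gives
\[
\dist(p_i, p_j) \;\geq\; \dist(p_j, \nn(p_i)) - \dist(p_i, \nn(p_i)) \;>\; 2 d_i - d_i \;=\; d_i.
\]

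The second ingredient is the trivial bound $d_j \leq \dist(p_i, p_j)$, which holds because $p_i \in \{p_0,\ldots,p_{j-1}\}$ and $\nn(p_j)$ is a nearest neighbor of $p_j$ in this set. Combining the two ingredients I finish by a short case distinction. If $d_i \geq d_j$, then $d_i/2 + d_j/2 \leq d_i < \dist(p_i, p_j)$. If $d_i < d_j$, then $d_i/2 + d_j/2 < d_j \leq \dist(p_i, p_j)$. In both cases $\dist(p_i, p_j) > d_i/2 + d_j/2$, proving $D_i^* \cap D_j^* = \emptyset$. There is no real obstacle here; the only point requiring care is making sure one uses the monotonicity of the ranges and the freshly-set value $2 d_i$ (rather than just $d_i$) to get the strict inequality $\dist(p_i, p_j) > d_i$, which is why the factor $2$ in 2-\NN is essential for this argument.
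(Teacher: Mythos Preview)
Your proof is correct and follows essentially the same approach as the paper. The paper argues by contradiction---assuming $D^*_i \cap D^*_j \neq \emptyset$ it derives $d_i \geq \dist(p_i,p_j)$ and hence $\dist(p_j,\nn(p_i)) \leq 2d_i$, contradicting that $p_j$ is uncovered---while you prove directly that $\dist(p_i,p_j) > d_i$ from the same uncovered-ness fact and then finish with a short case split; the underlying ingredients ($\dist(p_j,\nn(p_i)) > 2d_i$, $d_j \leq \dist(p_i,p_j)$, and the triangle inequality) are identical.
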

\begin{proof}
Without loss of generality we assume that $i<j$. Suppose for a contradiction
that $D^*_i \cap D^*_j\neq\emptyset$. 
Let~$p_{i'}:=\nn(p_i)$ and~$p_{j'} := \nn(p_j)$, and let~$d_i := \dist(p_i,p_{i'})$ and~$d_j := \dist(p_j,p_{j'})$. 
Since~$i'<i<j$, we have $\dist(p_j,p_{i'}) > 2d_i$, otherwise~$p_j$ lies inside the disk of~$p_{i'}$ when~$p_j$ is inserted and we would have~$D_j^* = \emptyset$. On the other hand, $d_i/2+d_j/2\geq \dist(p_i,p_j)$
because $D^*_i \cap D^*_j\neq\emptyset$. Since $d_j \leq \dist(p_i,p_j)$, which
is true because we assumed~$i<j$, this implies $d_i\geq \dist(p_i,p_j)$.
But then $\dist(p_j,p_{i'}) \leq d_i +  \dist(p_i,p_j) \leq 2d_i$, a contradiction.
\end{proof}

\begin{lemma}\label{lem:D-in-1.5}
For any points~$p_i$ and~$p_j$ with~$i<j$, let~$D_j^{\OPT}(p_i)$ be 
the disk centered at~$p_i$ after~$p_j$ is inserted in an optimal 
solution and let~$\rho_{j}(p_i)$ be its radius. Furthermore, 
let~$D_{j}^{1.5 \OPT}(p_i)$ be the disk centered at $p_i$ of 
radius $1.5 \cdot \rho_{j}(p_i)$. Then, for every point $p_k$, 
there is a point~$p_i$ such that
the charging disk $D^*_k $ is contained in~$D_{k}^{1.5 \OPT}(p_i)$.
\end{lemma}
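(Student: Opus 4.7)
The plan is to exhibit the point $p_i$ explicitly using the optimal broadcast tree that exists just after $p_k$ has been inserted, and then to verify the containment with a single application of the triangle inequality.

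First, I would dispose of the degenerate case: if $D^*_k = \emptyset$, any $p_i$ works (say $p_i = p_0$), so assume $D^*_k$ is an actual disk of radius $d_k/2$ centered at $p_k$, where $d_k = \dist(p_k,\nn(p_k))$. Now consider the optimal broadcast tree at time $k$, i.e., the arborescence rooted at $p_0$ whose edges $(p_a,p_b)$ satisfy $\dist(p_a,p_b) \leq \rho_k(p_a)$. Since $p_k$ is not the root, it has a unique parent $p_i$ in this tree, and because the tree is on the inserted set $\{p_0,\ldots,p_k\}$, we have $i < k$ and
\[
\dist(p_i,p_k) \leq \rho_k(p_i).
\]
This is the $p_i$ I would use in the statement of the lemma.

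The key observation is that, since $p_i \in \{p_0,\ldots,p_{k-1}\}$ and $\nn(p_k)$ is the closest earlier-inserted point to $p_k$,
\[
d_k = \dist(p_k,\nn(p_k)) \leq \dist(p_k,p_i) \leq \rho_k(p_i).
\]
So the radius of the charging disk $D^*_k$ is at most half the radius of $D_k^{\OPT}(p_i)$, and its center $p_k$ lies inside $D_k^{\OPT}(p_i)$. By the triangle inequality, for any $q \in D^*_k$,
\[
\dist(p_i,q) \leq \dist(p_i,p_k) + \dist(p_k,q) \leq \rho_k(p_i) + \tfrac{d_k}{2} \leq \rho_k(p_i) + \tfrac{1}{2}\rho_k(p_i) = 1.5\,\rho_k(p_i),
\]
which places $q$ in $D_k^{1.5\OPT}(p_i)$ and yields the desired inclusion $D^*_k \subseteq D_k^{1.5\OPT}(p_i)$.

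The argument has essentially no obstacle once the right $p_i$ is chosen; the main thing to be careful about is using the tree at the correct moment in time (just after the insertion of $p_k$), so that the parent of $p_k$ in the tree is guaranteed to have index strictly smaller than $k$ and so that $\rho_k(p_i)$ is the OPT radius at that moment, consistent with the statement of the lemma.
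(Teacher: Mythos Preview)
Your proof is correct and follows essentially the same route as the paper: pick a point $p_i$ with $i<k$ whose optimal disk at time $k$ covers $p_k$ (you do this via the parent in the broadcast tree, the paper simply asserts such a $p_i$ exists), bound the charging radius by $\tfrac{1}{2}\dist(p_i,p_k)\leq\tfrac{1}{2}\rho_k(p_i)$ using the nearest-neighbor property, and finish with the triangle inequality. The only difference is that you are slightly more explicit about the choice of $p_i$ and the degenerate case.
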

\begin{proof}
Let $p_i$ be such that~$p_k$ is contained in~$D^{\OPT}_k(p_i)$. Upon insertion of~$p_k$, 
we create the charging disk~$D^*_k$ of 
radius~$\frac{1}{2} \dist(p_k,\nn(p_k)) \leq \frac{1}{2} \dist(p_i,p_k)$ 
centered at~$p_k$. Therefore, 
the point of~$D^*_k$ furthest from~$p_i$ is at distance at 
most~$\frac{3}{2} \dist(p_i,p_k)$. Thus~$D^*_k \subset D_k^{1.5 OPT}(p_i)$. 
\end{proof}
Using these two lemmas, we can conclude the following. 
\begin{theorem}
In $\Reals^2$ the strategy 2-\NN is 36-competitive for~$\alpha=2$.
\end{theorem}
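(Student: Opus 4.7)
The plan is to combine the two preceding lemmas with a simple area-packing argument. The cost incurred by 2-\NN upon processing a point~$p_k$ for which it takes action is at most $(2 d_k)^2 = 4 d_k^2$, where $d_k = \dist(p_k, \nn(p_k))$. Since the charging disk $D^*_k$ has radius $d_k/2$, its area is $\pi d_k^2/4$, so the cost charged to $p_k$ satisfies
\[
4 d_k^2 \; = \; \frac{16}{\pi}\cdot \area(D^*_k).
\]
Summing over all points~$p_k$ on which 2-\NN acts thus bounds $\cost_2(\text{2-\NN})$ by $(16/\pi) \sum_k \area(D^*_k)$.

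Next, I would use Lemma~\ref{D_disjoint_3-NN} and Lemma~\ref{lem:D-in-1.5} to bound $\sum_k \area(D^*_k)$ against $\OPT$. By Lemma~\ref{lem:D-in-1.5}, each nonempty charging disk $D^*_k$ is contained in $D^{1.5\OPT}_k(p_i)$ for some $p_i$, which in turn is contained in the disk of radius $1.5\,\rho(p_i)$ centered at $p_i$, where $\rho(p_i)$ is the \emph{final} OPT radius at $p_i$ (since OPT's radii only grow over time). Assign each $p_k$ to such a $p_i$. By Lemma~\ref{D_disjoint_3-NN}, the charging disks assigned to a common $p_i$ are pairwise disjoint and all fit inside a disk of area $\pi(1.5\,\rho(p_i))^2 = \tfrac{9\pi}{4}\rho(p_i)^2$. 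Summing the packing inequality over all $p_i$ yields
\[
\sum_k \area(D^*_k) \;\leq\; \sum_i \frac{9\pi}{4}\,\rho(p_i)^2 \;=\; \frac{9\pi}{4}\cdot\OPT.
\]

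Combining the two inequalities gives $\cost_2(\text{2-\NN}) \leq \tfrac{16}{\pi}\cdot\tfrac{9\pi}{4}\cdot\OPT = 36\cdot\OPT$, as desired. There is no real obstacle here; the only subtlety is to be careful that ``OPT disk after~$p_j$ is inserted'' in Lemma~\ref{lem:D-in-1.5} can be replaced by the \emph{final} OPT disk at~$p_i$ in the summation, which is justified because the online OPT may only increase its radii, so $D^{1.5\OPT}_k(p_i)$ is contained in the disk of radius $1.5\,\rho(p_i)$ centered at~$p_i$.
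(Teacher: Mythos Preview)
Your proof is correct and essentially identical to the paper's own argument. The only cosmetic difference is that you phrase the packing step in terms of $\area(D^*_k)=\pi d_k^2/4$ (so a factor $\pi$ appears and then cancels), whereas the paper works directly with $\radius(D^*_k)^2$; both yield the same $16\cdot\tfrac{9}{4}=36$ bound, and your remark about passing to the final \OPT\ radii is the same implicit step the paper makes.
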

\begin{proof}
Recall that the charging disk~$D^*_i$ has radius~$\dist(p_i,\nn(p_i))/2$. 
Thus the cost incurred by~2-\NN  upon the insertion of~$p_i$ is at 
most~$(2\cdot \dist(p_i,\nn(p_i)))^2 \leq 16\cdot\radius (D^*_i)^2$. 
By Lemma~\ref{D_disjoint_3-NN}, the disks~$D^*_i$ are pairwise disjoint. 
Let~$\mathcal{D}_{\opt}$ denote the set of disks in an optimal solution, 
and let $\OPT$ be its cost.
Then by Lemma~\ref{lem:D-in-1.5} we have
$\sum_{i=1}^{n-1} \radius(D^*_i)^2 \leq \sum_{D \in \mathcal{D}_{\opt}} ((3/2) \cdot \radius(D))^2 = \frac{9}{4} \OPT$.
Hence the total cost incurred by~2-\NN is 
bounded by $16 \cdot \sum_{i=1}^{n-1} \radius(D^*_i)^2	\leq 36 \opt$. 
\end{proof}

\mypara{Upper bound on the competitive ratio of \NN{}  for~$\alpha =2$.}
We now prove an upper bound on the competitive ratio 
of \NN using a similar strategy as for~2-\NN. The proof uses
charging disks, as above. The main difference being how the charging disks are defined. 

Suppose that \NN increases the range of $\nn(p_i)$ upon the insertion of~$p_i$.
Then the \emph{charging disk}~$D^*_i$ is the disk of radius $\gamma \cdot d_i$
that is centered on the midpoint of the segment $p_i \nn(p_i)$, 
where~$d_i :=\dist(p_i,\nn(p_i))$ and~$\gamma$ is a constant to be determined later. 
If \NN did nothing upon insertion of~$p_i$, we define~$D_i^*:=\emptyset$. 
\ifFullVersion
We now show that 
\else
As shown in the full version of the paper, 
\fi
the charging disks are disjoint if we pick $\gamma$ suitably.
\begin{lemma}\label{lem:D_disjoint_NN}
Let $\gamma < \frac{3-\sqrt{7}}{4}$. 
Then for every pair~$D^*_i,D^*_{j}$ of charging disks with~$i\neq j$, we have~$D^*_i \cap D^*_{j} = \emptyset$. 
\end{lemma}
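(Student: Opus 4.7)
I would prove the stronger statement that $\dist(m_i,m_j) > \gamma(d_i+d_j)$ for every pair of nonempty charging disks. Assume WLOG that $i<j$, and write $p_{i'}:=\nn(p_i)$, $p_{j'}:=\nn(p_j)$. The first step is to extract two geometric constraints directly from the behaviour of \NN{}: (a) $\dist(p_j,p_{i'})>d_i$, because \NN{} sets $r(p_{i'})=d_i$ when it processes $p_i$, ranges are monotone non-decreasing, and $p_j$ lies outside every existing ball upon its insertion; and (b) $\dist(p_j,p_i)\geq d_j$, since $p_i$ is already present at time $j$ and $d_j$ is the distance from $p_j$ to the \emph{nearest} such point.

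Next I would apply Apollonius' theorem (the parallelogram-law expression of the median) to triangle $p_i p_{i'} p_j$ with median from $p_j$ to $m_i$:
\[
\dist(p_j,m_i)^2 \;=\; \tfrac{1}{2}\dist(p_j,p_i)^2 + \tfrac{1}{2}\dist(p_j,p_{i'})^2 - \tfrac{d_i^2}{4} \;>\; \tfrac{d_j^2}{2} + \tfrac{d_i^2}{4},
\]
where the inequality uses (a) and (b). Since $m_j$ lies at distance exactly $d_j/2$ from $p_j$, the triangle inequality then gives
\[
\dist(m_i,m_j) \;\geq\; \dist(p_j,m_i) - \tfrac{d_j}{2} \;>\; \sqrt{\tfrac{d_j^2}{2} + \tfrac{d_i^2}{4}} - \tfrac{d_j}{2}.
\]

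Finally, I would verify algebraically that this last quantity exceeds $\gamma(d_i+d_j)$ whenever $\gamma<(3-\sqrt{7})/4$. Setting $t:=d_j/d_i$ and dividing by $d_i$, the task reduces to the one-variable inequality $\sqrt{t^2/2+1/4}-t/2 \geq \gamma(1+t)$ for all $t\geq 0$. Both sides are non-negative, so squaring after rearrangement yields a quadratic in $t$; requiring non-negative leading coefficient and non-positive discriminant gives a simple quadratic condition in $\gamma$ that is easily checked to hold throughout the stated range.

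The main obstacle is the last step's algebra: squaring must be handled carefully so as to preserve the direction of the inequality, and the resulting discriminant has to be massaged down to a clean quadratic in $\gamma$. The geometric part---identifying the two \NN{}-constraints and combining them through Apollonius and one triangle inequality---is short and direct.
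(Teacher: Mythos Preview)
Your proof is correct, and it is considerably cleaner than the paper's argument. The paper proceeds by a case distinction on whether $\nn(p_i)=\nn(p_j)$; in the unequal case it performs a geometric analysis of two ``crescent'' regions, uses the law of cosines to bound how far $p_j$ can be from $p_i$ or $p_{i'}$, and only at the very end invokes Apollonius' theorem in one sub-sub-case. By contrast, you extract the two constraints $\dist(p_j,p_{i'})>d_i$ and $\dist(p_j,p_i)\geq d_j$ once and for all, and then a single application of Apollonius' theorem to the triangle $p_ip_{i'}p_j$ together with one triangle inequality handles every case uniformly.

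Your approach also yields a strictly stronger statement. Carrying out the algebra you sketch, the squared inequality becomes
\[
\bigl(\tfrac14-\gamma-\gamma^2\bigr)t^2-\gamma(1+2\gamma)\,t+\bigl(\tfrac14-\gamma^2\bigr)\ \geq\ 0,
\]
whose discriminant equals $3\gamma^2+\gamma-\tfrac14$. This is non-positive precisely when $\gamma\leq\tfrac16$, and the leading coefficient is positive for all such~$\gamma$. Hence your argument in fact proves disjointness of the charging disks for every $\gamma\leq\tfrac16\approx 0.167$, not merely for $\gamma<(3-\sqrt{7})/4\approx 0.089$ as stated in the lemma. Plugging $\gamma=\tfrac16$ into the subsequent bound $\bigl(\tfrac{1.5+\gamma}{\gamma}\bigr)^2$ would lower the competitive ratio for \NN{} in $\Reals^2$ with $\alpha=2$ from $322$ down to~$100$.
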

\ifFullVersion
\begin{proof}
Let $p_i$ and $p_j$ be two points with charging disks $D_i^*$ and $D_j^*$. 
Let $p_{i'}:=\nn(p_i)$ and $p_{j'}:=\nn(p_j)$. 
Let also~$D_{i'}$, respectively~$D_{i}$, be the disk of radius~$\dist(p_i,p_{i'})$ centered on~$p_{i'}$,  respectively~$p_{i}$. 
We define~$D_{j'}$ and~$D_{j}$ similarly. 
Let also $m_i$ and $m_j$ be the midpoints between $p_i$ and $p_{i'}$, 
and $p_j$ and $p_{j'}$ respectively. 
We assume without loss of generality that $\dist(p_{i'},p_i)=1 \geqslant \dist(p_{j'},p_j)$.
We distinguish two cases. 

\begin{itemize}
\item First, if $i'=j'$, then $i>j$ otherwise $p_j\in D_{i'}$ when 
$p_j$ is inserted and~$D_j^*=\emptyset$. 
Moreover, let $H$ be the halfplane 
defined by the bisector of $p_{i'}=p_{j'}$ and $p_j$ with $p_j\in H$. Then, 
$p_i \notin H$ otherwise $\nn(p_i)$ is not $p_{i'}$ but $p_j$. 
This implies that the angle between~$p_{i'}p_i$ 
and~$p_{i'}p_j$ is at least $\pi/3$ (see Figure \ref{fig:i=i'}).
\begin{figure}[h]
\begin{center}
\begin{tikzpicture}
\node at (0,0) (i) {};
\node at (1,0) (m) {};
\node at (2,0) (k') {};

\draw[fill=mygray-medium, opacity=0, fill opacity=1] (-3,-2.5) rectangle (1,2.5);
\draw[fill=white, opacity=0, fill opacity=1] (i.center) circle (2);

\node at ([xshift=-7mm, yshift=-2mm]i.center) {$p_{i'}=p_{j'}$};
\node at ([xshift=4mm, yshift=-2mm]k'.center) {$p_j$};

\draw[thick] (i.center) circle (2);
\draw[thick] ([yshift=-25mm]m.center) -- ([yshift=25mm]m.center);
\node at (2.5,2) {$H$};

\draw[thick] (i.center) -- (k'.center);
\draw[thick] (i.center) -- (-60:2);

\draw[thick] (0.3,0) arc (0:-60:0.3);
\node at (0.4,0.3) {$\pi/3$};

\foreach \i in {i,m,k'}{
	\draw[thick, fill=white] (\i.center) circle (0.05);
}
\end{tikzpicture}
\end{center}
\caption{The gray area depicts where~$p_i$ can be. Recall 
that~$\dist(p_j,p_{j'})\leqslant \dist(p_i,p_{i'})$.}
\label{fig:i=i'}
\end{figure}

Let the two half-lines starting at $p_{i'}$ with an angle of $\pi/6$ 
with $p_{i'}p_i'$ define a wedge $w$. 
If~$\gamma$ is such that~$D_j^*$ is contained in the 
wedge $w$, the disks $D_i^*$ and $D_j^*$ are disjoint. 
That is the case when the square triangle of hypotenuse $1/2$ 
and angle $\pi/6$ has its short side at most $\gamma$. 
Using trigonometry (see Figure \ref{fig:i=i'-trig}), 
we get~$\gamma \leqslant \sin (\pi/6)/2=1/4$ 
which is always the case since~$\gamma < \frac{3-\sqrt{7}}{4}$.
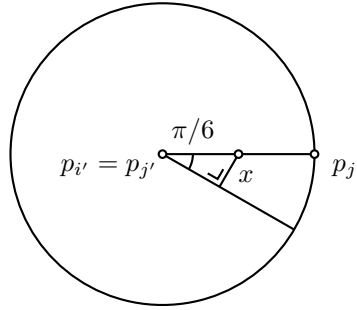
\begin{figure}[h]
\begin{center}
\begin{tikzpicture}
\node at (0,0) (i) {};
\node at (1,0) (m) {};
\node at (2,0) (k') {};

\node at ([xshift=-7mm, yshift=-2mm]i.center) {$p_{i'}=p_{j'}$};
\node at ([xshift=4mm, yshift=-2mm]k'.center) {$p_j$};

\draw[thick] (i.center) circle (2);

\draw[thick] (i.center) -- (k'.center);
\draw[thick] (i.center) -- (-30:2);
\draw[thick] (m.center) --++ (-120:0.5);
\draw[thick] ([xshift=-2mm, yshift=-2mm]m.center) --++ (-120:0.15)
			--++ (150:0.15);

\node at ([xshift=1mm, yshift=-3mm]m.center) {$x$};

\draw[thick] (0.4,0) arc (0:-30:0.4);
\node at (0.4,0.3) {$\pi/6$};

\foreach \i in {i,m,k'}{
	\draw[thick, fill=white] (\i.center) circle (0.05);
}
\end{tikzpicture}
\end{center}
\caption{If $\gamma\leqslant x = \sin (\pi/6)/2=1/4$, the disk 
centered at the midpoint 
is always contained in the wedge of angle $\pi /3$. }
\label{fig:i=i'-trig}
\end{figure}

\item We now deal with the case $i\neq i'$. Suppose for 
a contradiction that~$D_i^*$ and~$D_j^*$ intersect. 
Consider the interior of~$D_{i'}\cap D_{i}$. We 
claim that if~$p_j$ is in that region, then~$D_i^*$ 
and~$D_j^*$ do not intersect. 
Suppose~$p_j$ is in the interior of~$D_{i'}\cap D_{i}$. 
If~$j \geqslant i$, 
then $p_j\in D_{i'}$ and~$D_j^*=\emptyset$ 
which is a contradiction. 
If, on the other hand~$j < i$, when~$p_i$ is inserted, 
we have that~$\nn(p_i)$ is~$p_j$ and not~$p_{i'}$ 
since~$p_j$ is in~$D_{i}$, which is a contradiction. 
Therefore, from now on, we can assume 
that~$p_j \notin \interior (D_{i'} \cap D_i)$.
Note that this implies~$\dist(p_j,m_i)\geqslant 1/2$. 
Therefore, if~$\dist(p_j,m_j)< 1/2 - 2\gamma$, 
then we have that~$\dist(m_i,m_j) > 2\gamma$ and 
thus~$D_{i}^*$ and~$D_{j}^*$ can never intersect because 
the radius of~$D_i^*$  is~$\gamma\dist(p_{i'},p_i)=\gamma$ 
and the radius of~$D_j^*$ is~$\gamma\dist(p_{j'},p_j)\leqslant\gamma$.
Hence~$\dist(p_j,m_j)\geqslant 1/2 - 2\gamma$ which 
implies~$\dist(p_j,p_{j'})\geqslant 1 - 4 \gamma$. 

Moreover, we claim that~$p_j$ has to be in the disk~$D_{m_i}$ of 
radius~$2\gamma +1/2$ centered on~$m_i$ for the disks~$D_i^*$ 
and~$D_j^*$ to intersect. 
Suppose it is not 
the case. Then~$\dist(p_j,m_i)>2\gamma +1/2$ which implies that~
$\dist(m_i,m_j)\geqslant \dist(p_j,m_i) -  \dist(p_j,m_j) >  2\gamma$ 
and then the disks~$D_i^*$ and $D_j^*$ are disjoint. 
Figure~\ref{fig:i-neq-i'} shows the 
region~$A:= D_{m_i} \setminus \interior(D_i \cap D_{i'})$ 
where~$p_j$ has to be in order to have the disks 
intersect. 
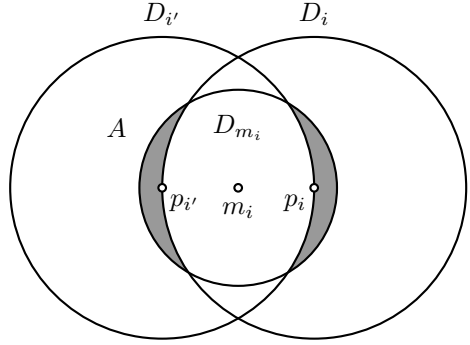
\begin{figure}[h]
\begin{center}
\begin{tikzpicture}
\node at (0,0) (i) {};
\node at (1,0) (m) {};
\node at (2,0) (k) {};

\draw[fill=mygray-medium, opacity=0, fill opacity=1] (m.center) circle (1.3);
\begin{scope}
			\clip (i) circle (2);
			\draw[fill=white, opacity=0, fill opacity=1] (k) circle (2);
\end{scope}

\node at ([xshift=3mm, yshift=-2mm]i.center) {$p_{i'}$};
\node at ([xshift=-2.5mm, yshift=-2mm]k.center) {$p_i$};
\node at ([xshift=0mm, yshift=-3mm]m.center) {$m_i$};

\draw[thick] (i.center) circle (2);
\draw[thick] (k.center) circle (2);
\draw[thick] (m.center) circle (1.3);

\node at ([xshift=-16mm, yshift=8mm]m.center) {$A$};
\node at ([xshift=0mm, yshift=8mm]m.center) {$D_{m_i}$};
\node at ([yshift=23mm]i.center) {$D_{i'}$};
\node at ([yshift=23mm]k.center) {$D_i$};

\foreach \i in {i,m,k}{
	\draw[thick, fill=white] (\i.center) circle (0.05);
}
\end{tikzpicture}
\end{center}
\caption{The region~$A:= D_{m_i} \setminus \interior(D_i \cap D_{i'})$ 
in gray depicts where~$p_j$ needs to be for the 
disks to intersect.  }
\label{fig:i-neq-i'}
\end{figure}

Using trigonometry, we compute the maximum distance between~$p_j$ 
and either~$p_{i'}$ or~$p_i$, depending on which is closer, that 
is~$\max_{p_j\in A} \min (\dist(p_{i'}, p_j), \break
	\dist(p_i, p_j))$. See Figure~\ref{fig:i-neq-i'-trig}.
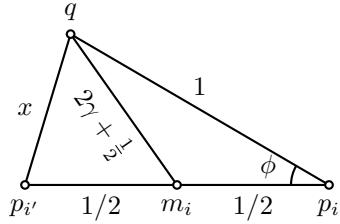
\begin{figure}[h]
\begin{center}
\begin{tikzpicture}
\node at (0,0) (i) {};
\node at (2,0) (m) {};
\node at (4,0) (k) {};
\node at (0.6,2) (k') {};

\node at ([xshift=0mm, yshift=-3mm]i.center) {$p_{i'}$};
\node at ([xshift=0mm, yshift=-3mm]k.center) {$p_i$};
\node at ([xshift=0mm, yshift=-3mm]m.center) {$m_i$};
\node at ([xshift=0mm, yshift=3mm]k'.center) {$q$};

\node at (1,-0.3) {$1/2$};
\node at (3,-0.3) {$1/2$};
\node at (2.3,1.3) {$1$};
\node at (0,1) {$x$};
\node[rotate=-55] at (1,0.8) {$2\gamma +\frac{1}{2}$};

\draw[thick] ([xshift=-5mm]k.center) arc (180:150:0.5);
\node at ([xshift=-8mm, yshift=2.5mm]k.center) {$\phi$};

\draw[thick] (i.center) -- (k.center) -- (k'.center) -- (i.center);
\draw[thick] (k'.center) -- (m.center);

\foreach \i in {i,m,k,k'}{
	\draw[thick, fill=white] (\i.center) circle (0.05);
}
\end{tikzpicture}
\end{center}
\caption{The point~$q$ of the triangle is defined as the intersection 
of~$D_{m_i}$ and~$D_{i'}$. We want to compute~$x$. }
\label{fig:i-neq-i'-trig}
\end{figure}

Using the law of cosines, we have 
\begin{align*}
& & \left(2\gamma +\frac{1}{2}\right)^2 &= \left(\frac{1}{2}\right)^2 
		+1^2 - 2 \cdot 1\cdot \frac{1}{2} \cdot \cos(\phi) \\
&\mbox{which is equivalent to} & 4\gamma^2 +2 \gamma &= 1- \cos(\phi) \\
&\mbox{and} & x^2 &= 1^2 
		+1^2 - 2\cdot 1\cdot 1 \cdot \cos(\phi) \\
&\mbox{which is equivalent to} & \frac{1}{2}x^2 &= 1- \cos(\phi). 
\end{align*}
Combining the two equations, 
we get~$x= 2 \sqrt{\gamma(2\gamma+1)} $. 
Since~$\gamma < \frac{3-\sqrt{7}}{4}$, 
then~$2 \sqrt{\gamma(2\gamma+1)} < 1 - 4\gamma$. 
Thus~$x < \dist(p_j,p_{j'})$. 
Consequently, we have that~$p_j$ 
is closer to either $p_{i'}$ or $p_i$ than it is 
to~$p_{j'}$. We show that 
both these options lead to a contradiction. 

Let us first assume that~$p_j$ is in the right crescent, 
so~$p_j$ is closer to~$p_i$ than~$p_{j'}$. 
If~$i<j$, then~$\dist(p_j,p_i) \leqslant x < \dist(p_j,p_{j'})$ which is a contradiction. 
Otherwise, if~$j<i$ then~$p_j$ is closer to~$p_i$ than~$p_{i'}$ 
when~$p_i$ is inserted, which is also a contradiction.  

Then, assume~$p_j$ is in the left crescent, 
so~$p_j$ is closer to~$p_{i'}$ than~$p_{j'}$. 
That implies~$j\leqslant i'$ otherwise~$p_{i'}$ is closer to~$p_j$ than~$p_{j'}$ 
when~$p_j$ is inserted, which is a contradiction. 
Note that if~$j = i'$, then there are only three points, but 
all the following arguments hold the same way. 
We hence have that~$j'<j\leqslant i'<i$. If~$p_{i}\in D_{j'}$ 
it implies that~$D_i^*=\emptyset$ leading to a 
contradiction. 
Therefore we have that~$\dist(p_i, p_{j'})> \dist(p_j,p_{j'}) \geqslant 1-4\gamma$. 
We now compute~$\dist(p_i, m_j)$ using Apollonius's 
theorem on the triangle~$\Delta p_ip_jp_{j'}$ 
and the median~$p_i m_j$: 
\begin{align*}
\dist(p_i,p_j)^2 + \dist(p_i&,p_{j'})^2 
	= 2(\dist(p_i,m_j)^2 + \dist(p_j,m_j)^2) \\[6pt]
\mbox{hence} \qquad 
2\dist(p_i,m_j)^2 
	&= \dist(p_i,p_j)^2 + \dist(p_i,p_{j'})^2 
	- 2 \dist(p_j,m_j)^2 \\
	&> 1 + (1-4\gamma)^2 - 2 \left( \frac{1}{2} \right)^2 \\
	&= 16\gamma^2 - 8\gamma + \frac{3}{2} 
\end{align*}
whose infimum in~$\left( 0,\frac{3-\sqrt{7}}{4} \right)$ 
is obtained when~$\gamma=\frac{3-\sqrt{7}}{4}$. 
We therefore have that~$2\dist(p_i,m_j)^2 > \frac{23-8\sqrt{7}}{2}$ 
and so~$\dist(p_i,m_j) > \frac{4-\sqrt{7}}{2}$. 
This implies that~$\dist(m_i,m_j)
	\geqslant \dist(p_i,m_j) - \dist(p_i,m_i)
	> \frac{3-\sqrt{7}}{2}
	= 2\gamma$.  
Thus~$D_i^* \cap D_j^* =\emptyset$ which is a contradiction. 
\end{itemize}
This concludes the lemma. 
\end{proof} 
\fi

We also need the following lemma, whose proof is similar to that of Lemma~\ref{lem:D-in-1.5}. 
\begin{lemma}\label{lem:D_1.5+alpha}
For any points~$p_i$ and~$p_j$ with~$i<j$, let~$D_j^{\OPT}(p_i)$ be the disk 
centered at~$p_i$ of radius~$\rho_{j}(p_i)$ 
after~$p_j$ is inserted 
in an optimal 
solution and let~$D_{j}^{(1.5+\gamma) \OPT}(p_i)$ be the disk centered at~$p_i$ of 
radius~$(1.5+\gamma) \rho_{j}(p_i)$. Then, for every point~$p_k$, 
there is a point~$p_i$ such that
the disk~$D^*_k $ is contained in 
the disk~$D_{k}^{(1.5+\gamma) \OPT}(p_i)$.
\end{lemma}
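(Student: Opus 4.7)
The plan is to mirror the argument of Lemma~\ref{lem:D-in-1.5}, adapted to the new definition of charging disk. Recall that for \NN, the disk $D^*_k$ is centered at the midpoint $m_k$ of the segment $p_k\,\nn(p_k)$, and has radius $\gamma d_k$ where $d_k=\dist(p_k,\nn(p_k))$. So instead of being centered at $p_k$, its center is pushed away by $d_k/2$, and its radius is $\gamma d_k$ rather than $d_k/2$. Everything else in the argument carries over.

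First I would select the point $p_i$: in the optimal solution just after $p_k$ arrives, $p_k$ must be reachable from the source, so there exists some earlier point $p_i$ (with $i<k$) such that $p_k\in D^{\OPT}_k(p_i)$, i.e., $\dist(p_i,p_k)\le \rho_k(p_i)$. Pick any such $p_i$. Next I would use the triangle inequality to bound the distance from $p_i$ to an arbitrary point $q\in D^*_k$:
\[
\dist(p_i,q)\;\le\;\dist(p_i,p_k)+\dist(p_k,m_k)+\dist(m_k,q)\;\le\;\dist(p_i,p_k)+\tfrac{1}{2}d_k+\gamma d_k.
\]

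The key observation that relates $d_k$ to $\dist(p_i,p_k)$ is exactly the one used in Lemma~\ref{lem:D-in-1.5}: since $\nn(p_k)$ is the nearest of all points with index $<k$, and $p_i$ is one such point, we have $d_k=\dist(p_k,\nn(p_k))\le \dist(p_k,p_i)$. Substituting gives
\[
\dist(p_i,q)\;\le\;\dist(p_i,p_k)+(\tfrac{1}{2}+\gamma)\dist(p_i,p_k)\;=\;(\tfrac{3}{2}+\gamma)\dist(p_i,p_k)\;\le\;(\tfrac{3}{2}+\gamma)\rho_k(p_i),
\]
which is exactly the condition $q\in D_k^{(1.5+\gamma)\OPT}(p_i)$. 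Since this holds for every $q\in D^*_k$, the containment $D^*_k\subseteq D_k^{(1.5+\gamma)\OPT}(p_i)$ follows.

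There is no real obstacle here; the only subtlety compared to Lemma~\ref{lem:D-in-1.5} is accounting for the fact that the center of $D^*_k$ is offset from $p_k$ by $d_k/2$, which contributes the extra $d_k/2$ term in the bound and thereby changes the blow-up factor from $3/2$ to $3/2+\gamma$. The existence of a suitable $p_i$ is guaranteed by feasibility of the OPT solution, exactly as in the previous lemma.
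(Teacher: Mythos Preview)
Your proof is correct and is exactly the argument the paper intends: it explicitly states that the proof of Lemma~\ref{lem:D_1.5+alpha} is similar to that of Lemma~\ref{lem:D-in-1.5}, and your adaptation---accounting for the shifted center at $m_k$ and the new radius $\gamma d_k$ via the triangle inequality, together with $d_k\le\dist(p_k,p_i)$---is precisely that similarity spelled out. One tiny cosmetic remark: in your closing sentence, the change in the blow-up factor from $3/2$ to $3/2+\gamma$ comes from \emph{both} the offset $d_k/2$ and the new radius $\gamma d_k$ (replacing the old radius $d_k/2$), not from the offset alone; but your displayed computation already handles this correctly.
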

Putting everything together we obtain the following theorem.
\begin{theorem}
In $\Reals^2$ the strategy \NN is 322-competitive for~$\alpha=2$.
\end{theorem}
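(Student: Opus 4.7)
The plan is to mimic the proof for $2$-\NN, combining the two prepared lemmas with an area argument, and then optimize the free parameter $\gamma$.

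First I would bound the per-step cost of \NN in terms of $\radius(D^*_i)$. When \NN extends the range of $\nn(p_i)$ upon the arrival of $p_i$, the increase in $\cost_2(\NN,p_i)$ is at most $d_i^2$, where $d_i=\dist(p_i,\nn(p_i))$. Since $D^*_i$ has radius $\gamma\cdot d_i$, this gives $\cost_2(\NN,p_i)\leq \radius(D^*_i)^2/\gamma^2$, and thus
\[
\cost_2(\NN)\;\leq\;\frac{1}{\gamma^2}\sum_{i=1}^{n-1}\radius(D^*_i)^2.
\]

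Next I would bound $\sum_i\radius(D^*_i)^2$ via an area argument. For each point $p_k$ Lemma~\ref{lem:D_1.5+alpha} yields some $p_i$ with $D^*_k\subseteq D^{(1.5+\gamma)\OPT}_k(p_i)$; charge $k$ to the optimal disk at $p_i$. Since the charging disks $D^*_i$ are pairwise disjoint (Lemma~\ref{lem:D_disjoint_NN}, taking $\gamma<(3-\sqrt 7)/4$), for each optimal disk $D$ of radius $\rho$ the areas of the charging disks routed to $D$ fit inside a disk of radius $(1.5+\gamma)\rho$, so their squared radii sum to at most $(1.5+\gamma)^2\rho^2$. Summing over all optimal disks yields $\sum_i\radius(D^*_i)^2 \leq (1.5+\gamma)^2\cdot\OPT$, and combining with the previous bound gives the competitive ratio
\[
\cost_2(\NN)\;\leq\;\left(\tfrac{1.5+\gamma}{\gamma}\right)^{\!2}\cdot\OPT.
\]

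Finally I would optimize $\gamma$. The ratio $(1.5+\gamma)/\gamma=1+1.5/\gamma$ is decreasing in $\gamma$, so we take $\gamma$ as close as possible to the threshold $(3-\sqrt 7)/4$ of Lemma~\ref{lem:D_disjoint_NN}. A short calculation gives $(1.5+\gamma)/\gamma\to 10+3\sqrt 7$, hence $(1.5+\gamma)^2/\gamma^2\to 163+60\sqrt 7\approx 321.68<322$, yielding the claimed bound.

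The heavy lifting has been done in Lemma~\ref{lem:D_disjoint_NN}; once disjointness holds, the rest is an area argument and an elementary optimization. The main danger in writing up is purely bookkeeping: ensuring that the disjointness threshold for $\gamma$ is strictly attainable (so we may take $\gamma$ arbitrarily close, picking up an extra $\eps$ that is absorbed in the additive constant of the competitive ratio) and verifying that the numerical value $(10+3\sqrt 7)^2$ indeed rounds up to $322$.
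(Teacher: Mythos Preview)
Your proposal is correct and follows essentially the same route as the paper: bound the per-step cost by $\radius(D^*_i)^2/\gamma^2$, use Lemma~\ref{lem:D_disjoint_NN} and Lemma~\ref{lem:D_1.5+alpha} together with an area argument to get $\sum_i \radius(D^*_i)^2 \leq (1.5+\gamma)^2\,\OPT$, and let $\gamma\to(3-\sqrt 7)/4$ to obtain $163+60\sqrt 7<322$. The only cosmetic difference is your remark about the additive constant, which is not actually needed---since the bound $(1.5+\gamma)^2/\gamma^2$ holds for every $\gamma<(3-\sqrt 7)/4$, one simply picks $\gamma$ close enough that the ratio drops below $322$.
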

\begin{proof}
Recall that $\radius(D^*_i) = \gamma \cdot\dist(p_i,nn(p_i))$. 
Thus the cost incurred by~\NN upon the insertion of~$p_i$ is at 
most~$\dist(p_i,nn(p_i))^2 \leq ((1/\gamma) \cdot\radius(D^*_i))^2$.
By Lemma~\ref{lem:D_disjoint_NN}, the disks~$D^*_i$ are pairwise disjoint. 
If~$\mathcal{D}_{\opt}$ denotes the set of disks used in an optimal solution, 
then by Lemma~\ref{lem:D_1.5+alpha} we have
$\sum_{i=1}^{n-1} \rho(D^*_i)^2 \leq \sum_{D \in \mathcal{D}_{\opt}} ((1.5+\gamma) \cdot \rho(D))^2 = (1.5+\gamma)^2 \OPT$,
where~$\OPT$ is the cost of an optimal solution.
Hence the total cost incurred by~\NN is 
at most~$\frac{1}{\gamma^2} \sum_{i=1}^{n-1} \rho(D^*_i)^2
	= \frac{1}{\gamma^2} \cdot (1.5+\gamma)^2 \opt$. 
Since this holds for any value of~$\gamma < \frac{3-\sqrt{7}}{4}$, we can 
conclude that the cost incurred by~\NN is 
at most~$\frac{4^2}{(3-\sqrt{7})^2} \cdot (1.5+\frac{3-\sqrt{7}}{4})^2 \opt
	= (163 + 60 \sqrt{7}) \opt
	< 322 \opt $.
\end{proof}
 \section{Online range-assignment in general metric spaces}
\label{se:metric}
In this section we consider the problem in general metric spaces. 
\ifFullVersion
We
\else
In the full version of the paper we
\fi
also consider the offline variant of the problem in the next section; here we focus on the online variant,
for which we give an $O(\log n)$-competitive algorithm. 
The key insight to our algorithms is to formulate the problem as a set-cover problem
and apply linear-programming techniques. As we will see later, applying the online set cover algorithm of Alon~\etal~\cite{AlonAABN09} yields a competitive ratio much worse than $O(\log n)$, so we need to exploit structural properties of the particular set cover instances arising from our problem.

\subsection{A set cover formulation and its LP}
\label{se:gen:LP}
Let $\radii$ be the set of distances between pairs of points. Observe that we can restrict ourselves 
without loss of generality to only using ranges from~$\radii$. This allows us to formulate the problem 
in terms of set cover: The elements are the points $p_0, p_1, \ldots, p_{n-1}$, with $p_0$ being the source point, 
and for each $0 \leq i \leq n-2$ 
and $r \in \radii$ there is a set $S_{i,r} := \{p_j : j > i \mbox{ and } \dist(p_i, p_j) \leq r\}$ with cost $r^\alpha$.
(Note that $S_{i,r}$ is the set of points arriving after $p_i$ that are within range~$r$ of $p_i$).
In the following, we abuse notation and also write $j \in S_{i,r}$ for points $p_j \in S_{i,r}$. We also say that $S_{i,r}$ is \emph{centered} at $p_i$.

Observe that a feasible range assignment corresponds to a feasible set cover. A set cover 
is \emph{minimally feasible} if removing any set from it causes an element to be uncovered. 
Since a minimally feasible set cover picks at most one set $S_{i,r}$ for each $i$, 
it corresponds to a feasible range assignment. (Note that applying the online set cover algorithm of Alon~\etal~\cite{AlonAABN09} only gives a competitive ratio of $O(\log^2 n/ \log \log n)$ as our set cover instance has $n-1$ elements and $|\radii|(n-1)$ sets.)

We can now formulate our problem as an integer linear program.
To this end we introduce, for each range $r\in\radii$ and each point~$p_i$ a
variable $x_{i,r}$, where $x_{i,r}=1$ indicates we choose the set $S_{i,r}$ (or, in other words,
that we assign range~$r$ to~$p_i$) and $x_{i,r}=0$ indicates we do not choose~$S_{i,r}$. Allowing the $x_{i,r}$ to take fractional values gives us the following relaxed LP. 
\begin{equation}
\boxed{
  \label{lp:P}
\begin{aligned}
  \mbox{Minimize} \quad            
  & \sum_{0 \leq i \leq n-2} \sum_{r \in \radii} x_{i,r}\cdot r^\alpha\\
  \mbox{Subject to}\quad  
  & \sum_{i, r : j \in S_{i,r}} x_{i,r} \geq 1 &\quad \mbox{ for all } 1 \leq j \leq n-1\\
  & x_{i,r} \geq 0 &\quad \mbox{ for all } (i,r) \mbox{ with } 0 \leq i \leq n -2 \mbox{ and } r \in \radii
\end{aligned} 
}
\end{equation}
The dual LP corresponding to the LP above is as follows.
\begin{equation}
\boxed{
  \label{lp:D}
\begin{aligned}
  \mbox{Maximize} \quad            
  & \sum_{1 \leq j \leq n} y_j \\
  \mbox{Subject to}\quad  
  & \sum_{j \in S_{i,r}} y_j \leq r^\alpha &\quad \mbox{ for all } (i,r) \mbox{ with } 0 \leq i \leq n-2 \mbox{ and } r \in \radii \\
    & y_j \geq 0 &\quad \mbox{ for all } 1 \leq j \leq n - 1
\end{aligned} 
}
\end{equation}
We say that the set $S_{i,r}$ is \emph{tight} if the corresponding dual constraint is tight, 
that is, if $\sum_{j \in S_{i,r}} y_j = r^\alpha$.

\subsection{The online algorithm and its analysis}
\label{se:gen:onub}
Recall that in the online version, we are given the source $p_0$ and then the points $p_1, \ldots, p_{n-1}$ 
arrive one-by-one. When a point $p_i$ arrives, its distances to previous points and the source are revealed.

\subparagraph{The algorithm.} Let $\gamma > 1$ be a constant that we will set later. The basic idea of the algorithm is that when a point $p_i$ arrives, we will raise its associated dual variable $y_i$ until some set $S_{j,r}$ containing $p_i$ is tight and then update the range of point $p_j$ to be $r_i(p_j) := \gamma \max\{r : \sum_{k \in S_{j,r} : k \leq i} y_k = r^\alpha\}$. In other words, the range of $p_j$ becomes $\gamma$ times the largest radius of the tight sets centered at $p_j$.

Here is a more precise description of the algorithm. When $p_i$ arrives, we initialize its dual variable~$y_i := 0$. If $p_i \in S_{j,r}$ for some $j < i$ and range $r$ with $\sum_{k \in S_{j,r} : k \leq i} y_k = r^\alpha$, then we set $r_i(p_j) := \gamma \max\{r : \sum_{k \in S_{j,r} : k \leq i} y_k = r^\alpha\}$ for one such~$j$. (It can happen that some $S_{j,r}$ is tight but that $r_{i-1}(p_j)$ is still smaller than $r$, because when multiple sets become tight at the same time, we only increase the range of one point.) Otherwise, we increase $y_i$ until for some $j < i$ and range $r$ we have $\sum_{k \in S_{j,r} : k \leq i} y_k = r^\alpha$; we then set $p_j$'s new range to $r_{i}(p_j) := \gamma r$ for one such~$j$. In both cases, we only set $p_j$'s range, the other ranges remain unchanged. Note that in the event that multiple sets centered at different points become tight simultaneously, we only update the range of one of them.

\subparagraph{Analysis.} 
We begin our analysis of the algorithm by showing the feasibility of the constructed dual solution $y$ and the corresponding range assignment. For each point $p_i$, the algorithm stops raising $y_i$ once some set $S_{j,r}$ containing $p_i$ is tight and then updates $p_j$'s radius to be $\gamma r > r$. This guarantees that no dual constraint is violated and that $p_i$ is covered by $p_j$.

Next we analyze the cost of this algorithm. We use the shorthand $r_j$ for the final range $r_{n-1}(p_j)$ 
of the point~$p_j$. First, we argue that it suffices to bound 
the cost of the points whose ranges are large enough. 
Let $H = \{0 \leq i \leq n-2 : r_i \geq \max_{0 \leq j \leq n-2} r_j/n\}$. Then, the cost of the algorithm is
\[
\sum_i r_i^\alpha = \sum_{i \in H} r_i^\alpha + \sum_{i \notin H} r_i^\alpha 
                \leq \sum_{i \in H} r_i^\alpha + n (\max_j r_j/n)^\alpha 
                \leq (1 + 1/n^{\alpha-1}) \sum_{i \in H} r_i^\alpha 
                \leq 2\sum_{i \in H} r_i^\alpha,
\]
where the second last inequality is because $\sum_{i \in H} r_i^\alpha \geq \max_j r_j^\alpha$ and the last is because $\alpha > 1$. 
In the remainder of this section 
we will show that 
\begin{equation}
\sum_{i \in H} r_i^\alpha \leq O(\log n) \cdot \sum_{1 \leq j \leq n-1} y_j.  \label{eq:gen-online-dual}
\end{equation}
The theorem then follows from the Weak Duality Theorem of Linear Programming  
which states that value of any feasible solution to the primal (minimization) problem is always 
greater than or equal to the value of any feasible solution to its associated dual problem. 

For $0 \leq i \leq n-2$, our algorithm sets the final range~$r_i$ of point $p_i$ such that 
$r_i = \gamma r$ for some $r \in \radii$ such that $\sum_{k \in S_{i,r}} y_k = r^\alpha$. 
Thus, we get 
\[
\left(\frac{r_i}{\gamma}\right)^\alpha = \sum_{j \in S_{i, r_i/\gamma}} y_j,
\]
and so
\[
\sum_{i \in H} r_i^\alpha 
    = \sum_{i \in H} \gamma^\alpha \left(\sum_{j \in S_{i, r_i/\gamma}} y_j\right) 
    = \gamma^\alpha \sum_{1 \leq j \leq n-1}  y_j \cdot \left| \{i \in H : j \in S_{i, r_i/\gamma}\} \right|,
\]
where the last equality follows by interchanging the sums. Thus, to prove Inequality~\eqref{eq:gen-online-dual}
it suffices to prove the following lemma.

\begin{lemma}
  For every $1 \leq j \leq n-1$ and any fixed $\gamma > 3$, we have $|\{i \in H : j \in S_{i, r_i/\gamma}\}| = O(\gamma^\alpha\log n)$.
\end{lemma}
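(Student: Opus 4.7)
The plan is to partition $T := \{i \in H : j \in S_{i, r_i/\gamma}\}$ by the geometric scale of $r_i$. For $i \in H$ one has $r_i \in [r_{\max}/n,\, r_{\max}]$ with $r_{\max} := \max_{\ell} r_{\ell}$, so only $O(\log_\gamma n) = O(\log n)$ buckets $T_k := \{i \in T : \gamma^k \leq r_i < \gamma^{k+1}\}$ can be non-empty. It therefore suffices to prove the per-scale bound $|T_k| = O(\gamma^\alpha)$; summing over the $O(\log n)$ scales then gives the lemma.

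To bound $|T_k|$ I would use the dual constraints. Set $\rho_i := r_i/\gamma \in [\gamma^{k-1},\, \gamma^k)$. By the algorithm, for each $i \in T_k$ the set $S_{i,\rho_i}$ is tight, so $\sum_{j'' \in S_{i,\rho_i}} y_{j''} = \rho_i^\alpha \geq \gamma^{(k-1)\alpha}$. Moreover, $j \in S_{i,\rho_i}$ forces $\dist(p_i, p_j) \leq \rho_i < \gamma^k$, so every centre $p_i$ with $i \in T_k$ lies in the ball of radius $\gamma^k$ around $p_j$ and hence within $2\gamma^k$ of every other such centre. Fixing $i^* := \min T_k$, the triangle inequality shows that any $j''$ appearing in some $S_{i,\rho_i}$ with $i \in T_k$ satisfies $j'' > i^*$ and $\dist(p_{i^*}, p_{j''}) \leq 3\gamma^k$, so $\bigcup_{i \in T_k} S_{i,\rho_i} \subseteq S_{i^*, R}$ for the smallest $R \in \radii$ with $R \geq 3\gamma^k$ (or, if no such $R$ exists, the largest available distance, which only makes the bound tighter). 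Dual feasibility at $(i^*, R)$ then gives $\sum_{j'' \in S_{i^*,R}} y_{j''} \leq R^\alpha = O(\gamma^{k\alpha})$.

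Summing the tightness identities over $i \in T_k$ and interchanging the order of summation,
\[
|T_k|\cdot \gamma^{(k-1)\alpha} \;\leq\; \sum_{i \in T_k} \rho_i^\alpha
\;=\; \sum_{j'' \in S_{i^*,R}} y_{j''} \cdot |\{i \in T_k : j'' \in S_{i,\rho_i}\}|.
\]
If each $p_{j''}$ lies in at most a constant number of the tight sets $S_{i,\rho_i}$ with $i \in T_k$, the right-hand side is $O(\gamma^{k\alpha})$, yielding $|T_k| = O(\gamma^\alpha)$ as required.

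The main obstacle is establishing this bounded-multiplicity claim, since a general metric space admits no a priori packing argument inside a ball. I expect this is precisely where the hypothesis $\gamma > 3$ is used: once $S_{i,\rho_i}$ becomes tight the algorithm grows $p_i$'s range to $\gamma \rho_i > 3\rho_i$, and combined with the rule that $y_{j''}$ is raised only when $p_{j''}$ is not already covered, together with the monotone growth of the $r_i$'s, this should limit how many centres in $T_k$ can simultaneously contain a common $p_{j''}$ in their tight sets. Making this precise by ordering the events at which the various $S_{i,\rho_i}$'s become tight (and tracking which event actually triggered each range update) is where the bulk of the remaining technical work lies.
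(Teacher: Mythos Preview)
Your bucketing framework is sound, but the proof has a genuine gap: the bounded-multiplicity claim is precisely the heart of the matter, and you have not proved it. Notice that your multiplicity claim, applied with $j''=j$ itself, already says $|T_k|\le C$; since $j\in S_{i,\rho_i}$ for every $i\in T_k$ by the very definition of $T$, the entire dual double-counting argument is a detour. So what you actually need is: at a fixed scale, only $O(1)$ indices $i\in H_j$ can coexist. Nothing in your proposal supplies this, and in a general metric space no packing argument is available; you must use the temporal structure of the algorithm, which you only gesture at.

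The paper's proof dispenses with both the bucketing and the dual accounting by establishing a single separation property: for any two $i,i'\in H_j$, one has $r_{i'}>\tfrac{\gamma-1}{2}\,r_i$ or $r_i>\tfrac{\gamma-1}{2}\,r_{i'}$. The idea is to look at the last-arriving points $p_k,p_{k'}$ that caused the algorithm to update $r_i$ and $r_{i'}$, respectively; since each arrival updates at most one range, $p_k\neq p_{k'}$. If $p_{k'}$ arrives later, it was uncovered on arrival, so $\dist(p_i,p_{k'})>r_i$ (the range $r_i$ was already final). On the other hand, the triangle inequality through $p_j$ and $p_{i'}$ gives $\dist(p_i,p_{k'})\le r_i/\gamma+2r_{i'}/\gamma$, forcing $r_{i'}>\tfrac{\gamma-1}{2}\,r_i$. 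This is exactly where $\gamma>3$ is used (so that $(\gamma-1)/2>1$), and since all ranges in $H$ lie within a factor $n$ of each other, it yields $|H_j|=O(\log_{(\gamma-1)/2} n)=O(\log n)$ directly. The same argument, restricted to a bucket, would give the $|T_k|=O(1)$ bound you need, but once you have the separation property the bucketing is superfluous.
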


\begin{proof}
  Define $H_j = \{i \in H : j \in S_{i, r_i/\gamma}\}$. We will show that 
  for every $i, i' \in H_j$, either $r_i > \frac{\gamma - 1}{2} r_{i'}$ or $r_{i'} > \frac{\gamma - 1}{2} r_i$.
  This implies that the $t$-th smallest range (among the points in $H_j$)
  is at least $((\gamma-1)/2)^t$ times the smallest range (among those points). 
  Since $\frac{\max_{i \in H_j} r_i}{\min_{i \in H_j} r_i} \leq n$, this means that
  $|H_j| = O(\log_{(\gamma-1)/2} n) = O(\log n)$. 

  Suppose $i, i' \in H_j$. Let $p_k$ be the last-arriving point that causes our 
  algorithm to update $r_i$, and $p_{k'}$ be the last-arriving point that causes 
  our algorithm to update $r_{i'}$. Since the arrival of any point causes at most one point's range to be updated, we have that $p_k \neq p_{k'}$. Suppose that $p_k$ arrived before $p_{k'}$. 
  By construction of $r_{i'}$, we have $\dist(p_{k'},p_{i'}) = r_{i'}/\gamma$. 
  Moreover, since $i, i' \in H_j$, we have $\dist(p_i, p_j) \leq r_i/\gamma$ 
  and $\dist(p_{i'}, p_j) \leq r_{i'}/\gamma$. Therefore, by the triangle inequality,
  \[
  \dist(p_i, p_{k'}) 
        \leq \dist(p_i, p_j) + \dist(p_j,p_{i'}) + \dist(p_{i'}, p_{k'}) 
        \leq 2\frac{r_{i'}}{\gamma} + \frac{r_i}{\gamma}.
  \]
  Since $p_{k}$ arrived before $p_{k'}$ and $p_{k'}$ caused our algorithm to 
  update $r_{i'}$, the point $p_{k'}$ must have been uncovered when it arrived, and so $\dist(p_i, p_{k'}) > r_i$. 
  Therefore, we get
  \[
  r_i < \dist(p_i,p_{k'}) \leq 2\frac{r_{i'}}{\gamma} + \frac{r_i}{\gamma}
  \]
  and so $r_{i'} > \frac{\gamma - 1}{2} r_i$
  as desired. In the case that $p_{k'}$ arrived before $p_k$, a similar 
  argument yields $r_i > \frac{\gamma - 1}{2} r_{i'}$.
\end{proof}
By setting $\gamma = 4$ we obtain the following theorem.
\begin{theorem}
  \label{thm:gen:online-alg}
  For any power-distance gradient~$\alpha >1$, there is a $O(4^\alpha\log n)$-competitive algorithm 
  for the online range assignment problem in general metric spaces.
\end{theorem}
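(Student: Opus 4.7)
The plan is to chain together the ingredients already assembled: the primal--dual algorithm, the reduction to the set~$H$ of large-range points, the tightness identity relating $r_i^\alpha$ to the dual, the lemma bounding $|H_j|$, and finally weak LP duality. Instantiate the algorithm with $\gamma = 4$ (so that the hypothesis $\gamma > 3$ of the lemma holds).

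First, I would confirm that the algorithm produces a feasible pair (range assignment, dual solution). By construction, $y_i$ is raised only while every set $S_{j,r}$ containing $p_i$ satisfies $\sum_{k \in S_{j,r},\,k \leq i} y_k < r^\alpha$, so the dual constraints of~\eqref{lp:D} are maintained throughout. The raising stops exactly when some $S_{j,r} \ni p_i$ becomes tight, at which moment $p_j$'s range is set to (at least) $\gamma r > r \geq \dist(p_i,p_j)$, so $p_i$ lies in the new ball around $p_j$. Thus the communication graph contains a broadcast tree at every step and, since ranges are only ever increased, the monotonicity requirement of the online setting is respected.

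Second, I would bound the cost. The excerpt already establishes $\sum_i r_i^\alpha \leq 2\sum_{i \in H} r_i^\alpha$ by isolating points of negligible range (using $\alpha > 1$). For $i \in H$ the algorithm sets $r_i = \gamma r$ for some $r \in \radii$ with $\sum_{k \in S_{i,r}} y_k = r^\alpha$, so $(r_i/\gamma)^\alpha = \sum_{j \in S_{i,r_i/\gamma}} y_j$. Interchanging summations gives
\[
\sum_{i \in H} r_i^\alpha \;=\; \gamma^\alpha \sum_{1 \leq j \leq n-1} y_j \cdot |H_j|,
\qquad H_j := \{i \in H : j \in S_{i,r_i/\gamma}\}.
\]
Applying the preceding lemma (valid since $\gamma = 4 > 3$) yields $|H_j| = O(\log n)$ for every $j$, so the displayed sum is at most $O(4^\alpha \log n) \sum_j y_j$.

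Third, I would close the argument via the Weak Duality Theorem: $\sum_j y_j$ is the value of a feasible dual of~\eqref{lp:P}, so it lower-bounds the LP-optimal primal and hence the integral optimum $\OPT$. Combining with the two factors of $2$ and $\gamma^\alpha$ from the previous steps, the algorithm's cost is at most $O(4^\alpha \log n)\cdot \OPT$, establishing the claimed competitive ratio. There is no genuine obstacle in assembling these pieces; the only technical content that is not already mechanical is the lemma's spacing claim $r_{i'} > \tfrac{\gamma-1}{2} r_i$ for $i,i' \in H_j$, which has already been handled via the triangle inequality applied to the triggering point $p_{k'}$ that was uncovered upon its arrival.
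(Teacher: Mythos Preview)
Your proposal is correct and follows essentially the same argument as the paper: establish feasibility of both the dual and the range assignment, reduce to the set~$H$ at the cost of a factor~$2$, rewrite $\sum_{i\in H} r_i^\alpha$ via the tightness identity and interchange sums, apply the lemma bounding $|H_j|=O(\log n)$ with $\gamma=4>3$, and conclude by weak duality. There is no substantive difference in approach or organization.
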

 \ifFullVersion
\section{On offline algorithm for general metric spaces}
\label{se:gen:off}
In the offline setting, we are given the entire sequence of points $p_0, \ldots, p_n$ 
in advance and the goal is to assign ranges $r_0, \ldots, r_{n-1}$ to the points $p_0, \ldots, p_{n-1}$ 
so that for every $1 \leq i \leq n$, there exists $j < i$ such that $\dist(p_i, p_j) \leq r_j$. 
We can formulate the problem in this way because we know all points beforehand, and we are
interested in the cost of the final assignment. Thus we may immediately assign each point
its final range, and we need not specify a separate range for every point at each time step.
The stated condition on the assignment ensures that after inserting each $p_j$,
we have a broadcast tree on $p_0,\ldots,p_j$. Thus we require the algorithm to
be what Boyar~\etal~\cite{befkl-ods-19} call an \emph{incremental algorithm}:
namely an algorithm that maintains a feasible solution at any time
(even though, unlike an online algorithm) it may know the future).
We emphasise that this is different from the \emph{static} broadcast range assignment problem studied previously. 
To avoid confusion with the usual offline broadcast range assignment problem, we call this the \emph{Priority Broadcast Range Assignment} problem.\footnote{The priority of a point is its position in the sequence, the lower the position, the higher its priority. Each point can only be covered by a point with a higher priority.}
Below we give a $5^\alpha$-approximation algorithm for the offline version of the problem,
based on the LP formulated in Section~\ref{se:gen:LP}.
\medskip

The basic idea of the approximation algorithm is as follows. We start with a maximally feasible dual solution $y$, i.e. increasing any $y_j$ would violate some dual constraint. Since $y$ is maximally feasible, for every $j$, there exists a set $S_{i,r}$ containing $p_j$ that is tight. Thus, the tight sets form a feasible set cover. Let $\sol$ be subset of the tight sets that is a minimally feasible set cover. As observed above, for every $i$, there is at most one set $S_{i,r} \in \sol$. Thus, $\sol$ corresponds to a feasible range assignment. Let $r_i$ be the radius assigned to $p_i$.

We now modify the range assignment $r$ to get a range assignment $r'$ so that
\[
\sum_{0 \leq i \leq n-1} {r'}_i^2 \leq 5^\alpha \sum_{1 \leq j \leq n} y_j.
\]
Since $y$ is a feasible dual solution $y$, weak LP duality implies that $r'$ is a $5^\alpha$-approximation.

Say that $i$ \emph{conflicts} with $j$ if there exists  a point $p_k \in S_{j, r_j} \cap S_{i, r_i}$ such that $y_k > 0$. Order the indices in decreasing order of $r_i$, breaking ties arbitrarily, and denote by $i \prec j$ if $i$ comes before $j$ in this ordering. We use the following algorithm to construct $r'$.

\begin{algorithm}
\caption{Obtaining an approximate solution from $\sol$}
\begin{algorithmic}[1]
  \label{alg:gen-offline}
  \STATE Initialize $I \leftarrow \emptyset$
  \FOR {$i$ in order according to $\prec$}
  \IF {$i$ does not conflict with any $j \in I$}
  \STATE Define $C_i = \{i\} \cup \{j \succeq i: \mbox{$j$ conflicts with $i$ but not with $I$}\}$
  \STATE Add $i$ to $I$
\ENDIF
  \ENDFOR
  \FOR {$i \in I$}
  \STATE Let $p_{i'}$ be the earliest point in $C_i$ 
  \STATE Assign radius $r'_{i'} \leftarrow 5r_i$ and radius $r'_j \leftarrow 0$ for $j \in C_i \setminus \{i'\}$
  \ENDFOR
\end{algorithmic}
\end{algorithm}

\subparagraph{Analysis}
We begin by proving that $r'$ is a feasible range assignment.

\begin{lemma}
  \label{lem:feas}
  For each $j > 0$, there exists $i < j$ such that $\dist(p_j, p_i) \leq r'_i$.
\end{lemma}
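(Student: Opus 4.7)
The plan is to fix an arbitrary $j>0$ and locate a covering point $p_{i'}$ with $i'<j$ via the cluster structure produced by Algorithm~\ref{alg:gen-offline}. Since $\sol$ is a feasible set cover, there exists an index $a$ with $0\le a<j$ and $\dist(p_j,p_a)\le r_a$, where $S_{a,r_a}\in\sol$. I would first show that the clusters $\{C_k : k\in I\}$ partition all indices, so $a$ lies in some cluster $C_k$ with $k\in I$. For this, I would argue by induction on the $\prec$-order of processing: when we process an index $m$ that does conflict with the current $I$, pick the $\prec$-earliest $k\in I$ with which $m$ conflicts; by the choice of $k$, at the moment $k$ was inserted into $I$ the index $m$ did not yet conflict with $I$ (any such conflict would have involved a $\prec$-earlier member of $I$, contradicting minimality of $k$), and $m\succeq k$, so $m$ was placed in $C_k$.

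Now let $p_{i'}$ be the temporally earliest point of $C_k$, so $r'_{i'}=5r_k$. Since $a\in C_k$, we have $i'\le a<j$ in arrival order, giving $i'<j$ as required. It remains to bound $\dist(p_j,p_{i'})$ by $5r_k$. The central tool is that membership in the same cluster $C_k$ links points through shared witnesses of conflict: if $b\in C_k\setminus\{k\}$ then $b$ conflicts with $k$, so there is a point $p_\ell\in S_{k,r_k}\cap S_{b,r_b}$ with $\dist(p_k,p_\ell)\le r_k$ and $\dist(p_b,p_\ell)\le r_b\le r_k$ (the last inequality because $b\succeq k$, which is how ties in the $\prec$-order give $r_b\le r_k$).

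I would then split the bound into cases. If $k=a$ and $i'=a$, the bound is immediate. If $k=a$ but $i'\ne a$, the triangle inequality through a conflict witness $p_m$ between $i'$ and $a$ gives $\dist(p_j,p_{i'})\le r_a+r_a+r_{i'}\le 3r_a=3r_k$. If $k\prec a$ and $i'=k$, going through a conflict witness $p_\ell$ between $a$ and $k$ gives $\dist(p_j,p_{i'})\le r_a+r_a+r_k\le 3r_k$. In the remaining case $k\prec a$ and $i'\ne k$, chaining via two conflict witnesses $p_\ell$ (between $a$ and $k$) and $p_m$ (between $k$ and $i'$) yields
\[
\dist(p_j,p_{i'})\le r_a+r_a+r_k+r_k+r_{i'}\le 5r_k,
\]
using $r_a,r_{i'}\le r_k$. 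In every case $\dist(p_j,p_{i'})\le 5r_k=r'_{i'}$, proving the lemma.

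The main obstacle is the combinatorial bookkeeping: verifying that the clusters $\{C_k\}_{k\in I}$ really do cover every index (the partition claim), and tracking the $\prec$-order inequalities $r_b\le r_k$ for $b\in C_k$ so that all five triangle-inequality terms collapse to multiples of $r_k$. Once the partition is established and the conflict-witness mechanism is set up, the distance bound itself is a routine four-hop triangle inequality.
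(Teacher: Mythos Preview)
Your proposal is correct and follows essentially the same approach as the paper: both argue that the clusters $\{C_k\}_{k\in I}$ partition the indices appearing in~$\sol$, then use the conflict witnesses and the inequality $r_b\le r_k$ for $b\in C_k$ to bound distances by $5r_k$ via the triangle inequality. The paper's write-up is slightly more compact---it observes in one line that every point in $\bigcup_{b\in C_k} S_{b,r_b}$ lies within $3r_k$ of $p_k$ and that $\dist(p_{i'},p_k)\le 2r_k$, rather than splitting into your four cases---but the underlying argument is identical.
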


\begin{proof}
  Note that the sets $\{C_i\}_{i \in I}$ partition $\{1, \ldots, n\}$. Consider some set $C_i$ and let $p_{i}'$ be the earliest point in $C_i$. It suffices to prove that $S_{i', 5r_i} \supseteq \cup_{j \in C_i} S_{j,r_j}$. To see this, first observe that since $p_{i'}$ is the earliest point in $C_i$ it can potentially cover all the points covered by any other point $p_k$ for $j \in C_i$, i.e. $S_{i', r} \supseteq \cup_{j \in C_i} S_{j,r_j}$ for large enough $r$. Next, we show that $r = 5r_i$ suffices. Since $i$ conflicts with every $k \in C_i$ and $r_i = \max_{j \in C_i} r_j$, we have that every point in $\cup_{j \in C_i} S_{j,r_j}$ is within distance $3r_i$ of $p_i$ and that $\dist(p_{i'}, p_i) \leq 2r_i$. Thus, we get that $S_{i', 5r_i} \supseteq \cup_{j \in C_i} S_{j,r_j}$, as desired.
\end{proof}

\begin{lemma}
  \label{lem:cost}
  $\sum_i r'^\alpha_i \leq 5^\alpha\sum_{j > 0} y_j$.
\end{lemma}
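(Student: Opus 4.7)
The plan is to unfold the definition of $r'$ assigned by Algorithm~\ref{alg:gen-offline} and then use tightness of the sets $S_{i,r_i}$ together with the non-conflict property of the independent set $I$.

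First, I would observe that the algorithm sets $r'_{i'} = 5 r_i$ for exactly one point per cluster $C_i$ (where $i \in I$) and sets $r'_j = 0$ for every other index. Since the clusters $\{C_i\}_{i\in I}$ partition the index set, this gives immediately
\[
\sum_{i} (r'_i)^\alpha \;=\; \sum_{i \in I} (5 r_i)^\alpha \;=\; 5^\alpha \sum_{i \in I} r_i^\alpha.
\]
So it suffices to show that $\sum_{i\in I} r_i^\alpha \leq \sum_{j>0} y_j$.

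Next, I would use the fact that the set cover $\sol$ was chosen from tight sets of the maximal dual $y$. Concretely, for every $i \in I \subseteq \sol$ the set $S_{i, r_i}$ is tight, so
\[
r_i^\alpha \;=\; \sum_{k \in S_{i, r_i}} y_k.
\]
Summing over $i \in I$,
\[
\sum_{i \in I} r_i^\alpha \;=\; \sum_{i \in I} \sum_{k \in S_{i, r_i}} y_k.
\]

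The crux is to show that this double sum is bounded by $\sum_{j>0} y_j$, which requires that each $y_k$ (with $y_k > 0$) is counted at most once on the right-hand side. This is exactly where the definition of conflict and the choice of $I$ come in: by construction, for any two distinct $i, i' \in I$, neither conflicts with the other (the later one was only added to $I$ because it conflicted with no earlier member), and therefore there is no $p_k \in S_{i, r_i} \cap S_{i', r_{i'}}$ with $y_k > 0$. Hence the supports of $S_{i,r_i}$, restricted to indices $k$ with $y_k > 0$, are pairwise disjoint across $i \in I$, and
\[
\sum_{i \in I} \sum_{k \in S_{i, r_i}} y_k \;=\; \sum_{i \in I} \sum_{\substack{k \in S_{i, r_i} \\ y_k > 0}} y_k \;\leq\; \sum_{j > 0} y_j.
\]
Combining with the first display yields $\sum_i (r'_i)^\alpha \leq 5^\alpha \sum_{j>0} y_j$, as required. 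The main (mild) obstacle is phrasing the disjointness step correctly: one has to be careful that the independent-set construction guarantees non-conflict \emph{pairwise} among members of $I$ (not merely between each new addition and the previously chosen ones), which follows directly from the order of insertion and the definition of conflict used in the algorithm.
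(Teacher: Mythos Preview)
Your proof is correct and follows essentially the same approach as the paper: reduce to $\sum_{i\in I} r_i^\alpha \le \sum_{j>0} y_j$, expand each $r_i^\alpha$ via tightness of $S_{i,r_i}$, and use that $I$ is conflict-free so that no index $k$ with $y_k>0$ lies in two of the sets $S_{i,r_i}$ for $i\in I$. The paper's write-up is just a terser version of yours, phrasing the disjointness step as $\sum_j |\{i\in I : j\in S_{i,r_i}\}|\, y_j \le \sum_j y_j$.
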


\begin{proof}
  We have $\sum_i r'^\alpha_i = \sum_{i \in I} 5^\alpha r^\alpha_i$.  Since the sets in $\sol$ are tight, we have  
  \begin{align*}
      \sum_{i \in I} r^\alpha_i = \sum_{i \in I} \sum_{j \in S_{i,r_i}} y_j
    = \sum_j |\{i \in I : j \in S_{i,r_i}\}| y_j
  \leq \sum_j y_j
  \end{align*}
where the last inequality follows from the fact that $I$ is conflict-free.
\end{proof}

Thus, we get the following theorem.

\begin{theorem}
  There is a $5^\alpha$-approximation algorithm for the Priority Range Assignment problem in general metric spaces for any $\alpha > 1$. 
\end{theorem}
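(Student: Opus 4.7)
The plan is to exploit the LP pair formulated in Section~\ref{se:gen:LP} and construct a feasible priority range assignment whose cost is at most $5^\alpha$ times the dual objective; weak duality then upper-bounds this by $5^\alpha \cdot \opt$.

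I would start by greedily computing a maximally feasible dual solution $y$: raise each $y_j$ as much as possible subject to $\sum_{k \in S_{i,r}} y_k \leq r^\alpha$ for every $(i,r)$. Maximality forces every $p_j$ with $j \geq 1$ to lie in at least one tight set, so the tight sets cover all non-source points. From this cover I would extract a minimally feasible subcover $\sol$. Because $\sol$ is minimally feasible, it picks at most one $S_{i,r_i}$ per center $p_i$, yielding an assignment $r$. This $r$ has the right LP cost but is not yet a valid priority assignment: some $p_j$ might only be covered by a center that arrived later, i.e.\ $i > j$.

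To repair this I would run Algorithm~\ref{alg:gen-offline}: process indices in decreasing order of $r_i$, greedily build a conflict-free set $I$ (two centers conflict when some $p_k$ with $y_k > 0$ lies in both of their chosen sets), form clusters $C_i$ consisting of $i$ together with all later-processed centers that conflict only with $i$, and for each cluster promote the \emph{earliest-arriving} point $p_{i'} \in C_i$ to radius $5 r_i$ while zeroing all other radii in the cluster. The clusters partition the indices, so this defines $r'$ on all of $\{0, \ldots, n-1\}$.

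The proof then splits into the two lemmas. For feasibility (Lemma~\ref{lem:feas}) I would show $S_{i', 5 r_i} \supseteq \bigcup_{j \in C_i} S_{j, r_j}$. Because $i$ conflicts with every $j \in C_i$, there are witnesses in the intersections, which together with $r_j \leq r_i$ and the triangle inequality bound every point of $S_{j, r_j}$ within distance $3 r_i$ of $p_i$, and bound $\dist(p_{i'}, p_i) \leq 2 r_i$; summing gives distance at most $5 r_i$ from $p_{i'}$. Crucially $p_{i'}$ is the earliest point in $C_i$, so it really does have higher priority than every point it needs to cover. For the cost bound (Lemma~\ref{lem:cost}), the total cost is $\sum_i r_i'^\alpha = 5^\alpha \sum_{i \in I} r_i^\alpha$; tightness gives $r_i^\alpha = \sum_{j \in S_{i,r_i}} y_j$, and conflict-freeness of $I$ means each index $j$ with $y_j > 0$ is counted at most once, so $\sum_{i \in I} r_i^\alpha \leq \sum_j y_j \leq \opt$ by weak duality. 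Combining yields the $5^\alpha$ bound.

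The main obstacle is the feasibility argument: the constant $5$ emerges from a tight three-hop triangle-inequality chain ($p_{i'} \to p_i$ via the $C_i$-conflict, $p_i \to p_j$ via the $I$-$C_i$ conflict, $p_j \to $ any point of $S_{j, r_j}$), and one must be careful that the ordering by $r_i$, the definition of conflict via a common positive-dual witness, and the earliest-point choice within each $C_i$ interact correctly so that (i) clusters really do partition all indices, (ii) $r_i$ dominates $r_j$ for every $j \in C_i$, and (iii) the chosen $p_{i'}$ precedes every point it must cover. The cost bound is comparatively mechanical once the conflict-freeness of $I$ is interpreted as disjointness in the dual support.
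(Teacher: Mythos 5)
Your proposal is correct and follows essentially the same route as the paper's own proof: a maximally feasible dual, a minimally feasible subcover $\sol$ of tight sets, the conflict-free clustering algorithm ordered by decreasing $r_i$, the radius $5r_i$ for the earliest-arriving point of each cluster via the $2r_i+3r_i$ triangle-inequality chain, and the cost bound from conflict-freeness plus weak duality. The one misstatement is your motivation for the clustering step: since $S_{i,r}$ by definition contains only points arriving \emph{after} $p_i$, the assignment from $\sol$ already respects priorities, and the real issue the clustering fixes is that a single $y_j>0$ could lie in many tight sets of $\sol$, so the cost of $\sol$ need not be bounded by $\sum_j y_j$ --- a point your cost lemma nevertheless handles correctly.
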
 \fi
\section{Concluding Remarks}
We introduced the online version of the broadcast range-assignment problem, and we analyzed
the competitive ratio of two natural algorithm, \NN and \CI,
in $\Reals^1$ and $\Reals^2$ as a function of the power-distance gradient~$\alpha$.
While \NN is $O(1)$-competitive in $\Reals^2$ and for $\alpha=2$ the best competitive ratio we can 
prove is quite large, namely~322. The variant 2-\NN has a better ratio,
namely~36, but this is still large. We conjecture that the actual competitive ratio
of \NN is actually much closer to the lower bound we proved, which is~7.61.
We also conjecture that \CI has a constant (and small) competitive ratio in $\Reals^2$.
Another approach to getting better competitive ratios might be to develop more sophisticated algorithms.
For the general (metric-space) version of the problem, the main question is whether
an algorithm with constant competitive ratio is possible.

While the requirement that we cannot decrease the range of any point in the online
setting is perhaps not necessary in practice, our algorithms have the additional
benefit that they modify the range of at most one point. Thus it can also be seen as
the first step in studying a more general version, where we are allowed to
modify (increase or decrease) the range of, say, two points. In general, it is
interesting to study trade-offs between the number of modifications and
the competitive ratio. Studying deletions is then also of interest. 
\section*{Acknowledgement}
We thank two anonymous referees for their comments on a previous version of this paper. In particular, we thank them for suggesting to consider 2-\NN (in our previous version we analyzed 3-\NN) and for the proof of Lemma~\ref{D_disjoint_3-NN}.

\bibliographystyle{plainurl}
\bibliography{online-range-assignment}{}

\end{document}